\documentclass[pdflatex,sn-mathphys-num,referee]{sn-jnl}% Math and Physical Sciences Numbered Reference Style
\usepackage{academicons}
\usepackage{adjustbox}
\usepackage{algorithm}%
\usepackage{algorithmicx}%
\usepackage{algpseudocode}%
\usepackage{amsfonts}
\usepackage{amsmath,amssymb}
\usepackage{amsthm}
\usepackage[title]{appendix}%
\usepackage{bigints}
\usepackage{bm}
\usepackage{booktabs}
\usepackage{float}
\usepackage{graphicx}
\usepackage{geometry}
\usepackage{hyperref}
\usepackage{listings}%
\usepackage{mathtools}
\usepackage{multirow}%
\usepackage{setspace}
\usepackage{siunitx}
\usepackage[caption=false]{subfig}

\usepackage{textcomp}%
\usepackage{xcolor}

\usepackage{hyperref} %Must be added last
\graphicspath{{images/}}
\hypersetup{
    colorlinks=true,
    citecolor = blue,
    linkcolor=blue,
    filecolor=magenta,      
    urlcolor=cyan,
    pdfpagemode=FullScreen,
    }
    
\theoremstyle{thmstyleone}%
\newtheorem{theorem}{Theorem}[section]%  meant for continuous numbers
\newcommand{\balpha}{\bm{\alpha}}
\newcommand{\bbeta}{\bm{\beta}}

\newcommand{\balpham}{\bm{\alpha}_u}

\newcommand{\bD}{\mathcal{D}}
\newcommand{\bH}{\bm{H}}
\newcommand{\bpsi}{\bm{\eta}}

\newcommand{\bX}{\bm{X}}
\newcommand{\bY}{\bm{Y}}

\newcommand{\bxo}{\bm{x}_{o}}

\newcommand{\bxoi}{\bm{x}_{o,i}}
\newcommand{\bx}{\bm{x}}

\newcommand{\bz}{\bm{z}}
\newcommand{\CRA}{\mathbb{A}_p}

\newcommand{\ebg}{e^{\beta^{(s)}_{kg} }}

\newcommand{\ev}{\tilde{E}^{(r)}}
\newcommand{\evg}{\tilde{E}_{ikg}^{(r)}}
\newcommand{\ez}{\tilde{z}_{ig}^{(r)}}
\newcommand{\ismiss}{\mathbb{I}_{(k \in \mathcal{U} )} }
\newcommand{\isobs}{\mathbb{I}_{(k \in \mathcal{O} )} }

\newcommand{\mbmp}{\mathbb{B}_{p_{u}}}
\newcommand{\mbmpc}{\mathbb{B}^C_{p_{u}}}
\newcommand{\mbpi}{\mathbb{B}_{p_{u,i}}}
\newcommand{\mbmpci}{\mathbb{B}^C_{p_{u,i}}}
\newcommand{\ms}{\mathcal{U}}
\newcommand{\os}{\mathcal{O}}

\newcommand{\bTD}{\mathcal{TD}}
\newcommand{\tl}{\tilde{l}}

\theoremstyle{thmstyletwo}%
\numberwithin{theorem}{subsection}
\theoremstyle{thmstylethree}%

\definecolor{cerulean}{HTML}{2A52BE}
\begin{document}

\title[Article Title]{Handling Missingness and Censoring in Dirichlet Mixture Models}
%%=============================================================%%
%% GivenName	-> \fnm{Joergen W.}
%% Particle	-> \spfx{van der} -> surname prefix
%% FamilyName	-> \sur{Ploeg}
%% Suffix	-> \sfx{IV}
%% \author*[1,2]{\fnm{Joergen W.} \spfx{van der} \sur{Ploeg} 
%%  \sfx{IV}}\email{iauthor@gmail.com}
%%=============================================================%%

\author*[1]{\fnm{Jason} \sur{Pillay}}\email{js.pillay@up.ac.za}
\author[1,4]{\fnm{Andri\"ette} \sur{Bekker}}\email{andriette.bekker@up.ac.za}
\author[2]{\fnm{Cristina} \sur{Tortora}}\email{cristina.tortora@sjsu.edu}
\author[3]{\fnm{Antonio} \sur{Punzo}}\email{antonio.punzo@unict.it }
\equalcont{These authors contributed equally to this work.}
%%==================================%%
%% abstract %%
%%==================================%%

\abstract{%What is the problem?
Incomplete compositional data analysis faces a fundamental limitation: likelihood-based methods for compositional models generally require fully observed compositions, making it difficult to accommodate missing or censored proportions directly on the simplex. Consequently, analysts often discard partially observed compositions or transform the data into unconstrained spaces, potentially sacrificing interpretability and coherence.
This paper proposes a likelihood-based method for incomplete compositional data without leaving the simplex. Specifically, we develop an Expectation–Maximisation (EM) type algorithm for fitting finite mixtures of Dirichlet distributions in the presence of missing and censored components. The proposed approach performs parameter estimation and model-based imputation simultaneously while preserving the compositional structure and interpretability of the original variables.

A simulation experiment evaluates the performance of the proposed estimators and imputations under increasingly complex coarsening mechanisms. Particular attention is paid to clustering performance, and model selection outcomes. The results showed beneficial clustering performance despite observations being incomplete, and a higher probability of model selection metrics identifying the correct number of clusters compared to current alternative of case-deletion.

The practical utility of the method is illustrated using two real datasets with distinct coarsened patterns. Analysis of the xenolith dataset identifies a four-component Dirichlet mixture that reveals interpretable profiles of rock types and speciation methods. Application to PM$_{2.5}$ speciation data from the Air Quality System, containing both left-censored and missing-at-random values, supports a four-component mixture model that characterises compositional parts of particulate matter across the United States.\\
%{\bf Highlights:} Some highlights in this paper include the following:
% \begin{itemize}
%     \item The estimation algorithm can reliably perform on multiple forms of coarsening, namely: coarsening completely at % random, missing at random, and censoring.
%     \item The algorithm is capable of fitting a finite mixture model on incomplete compositional data.
%     \item While transformation-based methods need at least two observations per row, the proposed algorithm only needs at % least one.
%     \item Imputations are made directly on the simplex instead of the real plane and back-transformed.
%     \item The Dirichlet distribution can be fitted on compositional datasets with no complete rows with the proposed % algorithm -- a scenario in which complete-case deletion would exclude the entire dataset.
% \end{itemize}
}

\maketitle

{\bf Keywords:} Censoring, Compositional, Clustering, EM algorithm, Mixture Models, Missing values

\section{Introduction}
\label{introduction}

Compositional data are multivariate observations describing the relative proportions of parts that constitute a whole. Because only the relative information carried by the components is informative, compositional vectors consist of strictly positive values constrained to sum to a fixed constant, typically one or one hundred. Consequently, compositional observations lie on the simplex rather than in unconstrained Euclidean space. This constant-sum constraint induces dependence among the components, so that increasing one component necessarily decreases the relative proportions available to the others. Conventional multivariate methods therefore cannot be applied directly without risking spurious correlations and misleading inference.

Compositional data arise naturally across numerous scientific disciplines. In geochemistry, the relative abundances of major oxides are routinely analysed to classify rocks, distinguish lithologies, and investigate geological processes. Mantle xenoliths, fragments of the Earth's mantle transported to the surface through volcanic activity, provide one of the few direct sources of information about the composition of the upper mantle \citep{mantle_minerology, xenolith_scale}. Their chemical compositions are therefore play an important role in studying mantle evolution, partial melting, and tectonic history. Likewise, atmospheric particulate matter consists of mixtures of chemical species whose relative abundances provide insight into emission sources, atmospheric processes, and the effectiveness of pollution-control strategies. Although these applications arise in different scientific fields, they share a common objective: identifying groups of observations with compositions. In this paper, these motivating examples are represented by a large mantle xenolith dataset obtained from the EarthChem repository and particulate matter speciation measurements collected through the United States Environmental Protection Agency's Air Quality System (AQS).

The identification of latent groups is frequently one of the primary objectives of compositional data analysis. In geochemistry, clusters may correspond to distinct mantle lithologies, petrogenetic processes, or tectonic environments. In atmospheric science, clusters may reflect different pollution sources or atmospheric conditions that produce characteristic chemical signatures. Finite mixture models provide a natural probabilistic approach for this task by representing the population as a collection of latent subpopulations, each characterised by its own compositional distribution. Besides producing probabilistic cluster assignments, finite mixture models estimate representative compositions for each group and permit likelihood-based selection of the number of clusters. Consequently, they have become an attractive approach for model-based clustering of heterogeneous compositional datasets.

Two broad strategies have been developed for modelling compositional observations. The most widely adopted approach transforms the data from the simplex to Euclidean space using additive log-ratio, centred log-ratio, isometric log-ratio, or related transformations before applying conventional multivariate statistical methods \citep{cell_proportions, microbiome, PM2_5}. These methods have been highly influential and have enabled the extension of numerous statistical techniques to compositional data. Nevertheless, scientific interpretation on Euclidean space is ultimately required on the original simplex, whereas estimation is performed in the transformed space. Furthermore, transformations generally require complete observations before they can be computed, making the treatment of missing values and censoring an additional preprocessing step \citep{logsticnormal_inflated, logisticnormal}.

An alternative is to model compositions directly on the simplex. Among simplex-supported probability distributions, the Dirichlet distribution occupies a prominent position because its support coincides exactly with the sample space of compositional data while remaining mathematically simple and computationally efficient. Its parameters admit direct interpretation through the expected composition and the concentration of observations around that expectation. These advantages have led to successful applications of the Dirichlet distribution and finite mixtures of Dirichlet distributions in fields including social media analysis, microbiology, behavioural science, and environmental studies \citep{dirichlet_nested_fmm, dirichlet_microbiome, dirichlet_fmm, dirichlet_zoid, dirichlet_pathology, dirichlet_emotion}. For clustering applications, modelling directly on the simplex also avoids the need to interpret cluster characteristics through transformed coordinates.

Despite these advantages, the practical application of Dirichlet mixture models is complicated by the widespread occurrence of incomplete observations. Incomplete observations can have two or more missing or censored values. Missing values arise through equipment malfunction, incomplete reporting, data integration, or sampling limitations, while censored observations commonly occur because analytical instruments cannot quantify concentrations below a certain threshold. Both motivating datasets considered in this paper exhibit these challenges. The EarthChem mantle xenolith dataset contains substantial proportions of missing measurements arising from the aggregation of geochemical data collected using different analytical techniques, whereas the AQS particulate matter data contain both missing observations and left-censored measurements resulting from instrument-specific detection limits.

Incomplete observations present a particular challenge for compositional data because the closure constraint links every component of the composition. Uncertainty regarding a single component therefore affects the interpretation of every remaining component. Existing approaches generally address this problem through preprocessing procedures such as deletion, substitution, multiplicative replacement, nearest-neighbour imputation, regression-based imputation, or multiple imputation before fitting the statistical model \citep{geo_mean, knn, svd, alr_em_method, review}. Although these methods are often effective in reconstructing plausible compositions, they separate the treatment of incomplete observations from the subsequent clustering analysis. Consequently, uncertainty introduced during imputation is typically not incorporated into parameter estimation or cluster allocation. Transformation-based approaches introduce an additional complication because imputation or replacement is usually required before the transformation itself can be applied.

This paper proposes an Expectation-Maximisation type algorithm for fitting a finite mixture of Dirichlet distributions to compositional data containing missing and censored observations. Parameter estimation, cluster allocation, and imputation are performed simultaneously within a single likelihood-based procedure operating directly on the simplex. The proposed method accommodates datasets containing no fully observed observations, requiring only a single observed component per composition, thereby extending the applicability of Dirichlet mixture models to highly incomplete datasets. Its performance is evaluated through an extensive simulation study and demonstrated using the EarthChem mantle xenolith and AQS particulate matter datasets, where the resulting clusters provide interpretable summaries of geochemical and environmental variation while naturally accounting for incomplete observations.

%\begin{itemize}
%\item[-] The proposed algorithm enables the fitting of \textcolor{red}{a Dirichlet distribution and} finite mixtures of Dirichlet distributions even when the dataset contains no fully observed rows, requiring only a single observed component per row.

%\item[-] An Expectation–Maximisation (EM) type algorithm is developed for fitting the Dirichlet distribution and its finite mixtures to compositional data with missing or censored entries. The complete-data and observed-data likelihoods are derived, and an efficient update procedure for the Dirichlet parameters is obtained, yielding a computationally attractive estimation framework well suited to high-dimensional settings.

%\item[-] The method performs parameter estimation and imputation jointly within a single likelihood-based framework, operating directly on the simplex without requiring transformations to the real space. This avoids the additional errors and interpretational limitations associated with back-transformation.

%\item[-] The approach accommodates both missing and censored observations within a unified modelling structure.

%\item[-] An extensive simulation study compares the proposed method with common competitors and evaluates parameter recovery, imputation accuracy, \textcolor{orange}{and model selection reliability }across varying levels of missingness.

%\item[-] An applications to a real dataset demonstrate improved interpretability and consistent inference on the original compositional variables.

%\end{itemize}

 The rest of the paper is structured as follows: Section \ref{prelim} reviews briefly the Dirichlet distribution and its theoretical properties, as well as the nature of missing values considered in this paper. These provide the necessary context to introduce the novelty of the paper in Section \ref{methodology}, namely the algorithm that fits a finite mixture Dirichlet model onto incomplete data. The algorithm's performance is assessed in Section \ref{simulations} in an extensive simulation experiment. The promising results from the simulation study motivates for a practical application of the algorithm in Section \ref{application}, followed by the conclusion in Section \ref{conclusion}.

\section{Preliminaries}
\label{prelim}

This section begins with a review of the Dirichlet distribution (Section \ref{dirichlet_review}), followed by a discussion of the various types of unobserved values (Section \ref{types_of_missing}). Throughout this paper, for a given dimension $p \in \mathbb{N}_{+}$, we define the following domains:
\begin{align*}
    \mathbb{R}_{+}^p 
    &= \left\{ \bm{\alpha} = [\alpha_1,\dots,\alpha_p]^{\top} 
    : \alpha_k > 0 \text{ for } k = 1,\dots,p \right\}, 
\end{align*}
the open unit simplex
\begin{align*}
    \mathbb{V}_p 
    &= \left\{ \bx = [x_1,\dots,x_p]^{\top} 
    : \text{$x_k \ge 0$ for $k = 1,\dots,p$ with $\sum_{k=1}^p x_k < 1$} \right\},
\end{align*}
and the standard simplex
\begin{align*}
    \mathbb{S}_p 
    &= \left\{ \bx = [x_1,\dots,x_p]^{\top} 
    : \text{$x_k \ge 0$ for $k = 1,\dots,p$ with $\sum_{k=1}^p x_k = 1$}
     \right\}.
\end{align*}

\subsection{The Dirichlet distribution: a brief review}
 \label{dirichlet_review}
    A random vector $\bX\in\mathbb{S}_p$ is said to follow a Dirichlet distribution with parameter vector $\balpha = [\alpha_1,\ldots,\alpha_p]^{\top} \in \mathbb{R}_+^p$, denoted by $\bX \sim \bD_{\mathbb{S}_p}(\balpha)$, if its probability   density function (density) is
    \begin{equation}
    \label{pdf_dirichlet}
        f_{\bD_{\mathbb{S}_p}}(\bx;\balpha) 
        =
           \dfrac{\Gamma(\alpha_0)}
            {\displaystyle\prod_{k=1}^p \Gamma(\alpha_k)}
            \displaystyle\prod_{k=1}^p x_k^{\alpha_k-1},\qquad 
            \bx \in \mathbb{S}_p, 
    \end{equation}
    where $\alpha_0 = \|\balpha\|_1 = \displaystyle\sum_{k=1}^p \alpha_k$, and $\Gamma(\cdot)$ denotes the gamma function.
Since $\displaystyle\sum_{k=1}^p X_k = 1$, the last component of $\bX$---though, in principle, any component could be chosen--can be expressed as
    $$
    X_p = 1 - \sum_{k=1}^{p-1} X_k.
    $$
Hence, $\bX \sim \bD_{\mathbb{S}_p}(\balpha)$ can equivalently be represented by its first $p-1$ components, say $\bX_{-p} =    [X_1,\ldots,X_{p-1}]^{\top}\in\mathbb{V}_{p-1}$, whose density is
    \begin{equation}
        f_{\bD_{\mathbb{V}_{p-1}}}(\bx_{-p};\balpha) 
        =
            \dfrac{\Gamma(\alpha_0)}
            {\displaystyle\prod_{k=1}^p \Gamma(\alpha_k)}
            \left(1 - \|\bx_{-p}\|_1 \right)^{\alpha_p-1}\displaystyle\prod_{k=1}^{p-1} x_k^{\alpha_k-1}
            , \qquad
            \bx_{-p} \in \mathbb{V}_{p-1}. 
            \label{eqref:Dirichlet2}
    \end{equation}
    In this case, we will write $\bX_{-p} \sim \bD_{\mathbb{V}_{p-1}}(\balpha)$.
    Note that, although \eqref{pdf_dirichlet} and \eqref{eqref:Dirichlet2} represent
equivalent formulations of the same density, they are defined on different domains, namely $\mathbb{S}_p$ and $\mathbb{V}_{p-   1}$,
    respectively. 
    See \citet[Chapter~2]{dirichlet} for further details on the Dirichlet distribution.

\subsection{The truncated Dirichlet distribution}
\label{truncated_dirichlet_review}

Let $\CRA \subset \mathbb{S}_p$ be a measurable set with
$\Pr\left(\bY \in \CRA\right) > 0$, where $\bY \sim \bD_{\mathbb{S}_p} (\balpha)$.
A random vector $\bX$ is said to follow a truncated Dirichlet distribution on $\CRA$, with parameter vector $\balpha \in \mathbb{R}_+^p$,
denoted $\bX \sim \bTD_{\CRA}(\balpha)$, if
\[
\bX \stackrel{d}{=} \bY \mid \{\bY \in \CRA\}.
\]
The density of $\bX \sim \bTD_{\CRA}(\balpha)$ is
\begin{align}
\label{truncated_pdf}
f_{\bTD_{\CRA}}(\bx;\balpha) 
=
\frac{f_{\bD_{\mathbb{S}_p}}(\bx;\balpha)}{F(\CRA; \balpha)},
%\, \mathbb{I}_{\CRA}(\bx),
%\quad \bx \in \mathbb{S}_p,
\qquad \bx \in \CRA,
\end{align}
where
\[
F(\CRA; \balpha)
=
\Pr(\bY \in \CRA), 
\qquad \bY \sim \bD_{\mathbb{S}_p}(\balpha),
\]
is the Dirichlet probability of the truncation region $\CRA$.
Throughout this paper, $\CRA$ is assumed to be defined by lower and/or upper bounds on the components of the random vector. Specifically,
\begin{align}
\CRA 
&= \left\{ \bx \in \mathbb{S}_p 
: a_{kL} \le x_k \le a_{kU}, \; k = 1,\dots,p \right\} \nonumber\\
&= \left\{ \bx \in \mathbb{S}_p 
: \bm{a}_L \le \bx \le \bm{a}_U \right\}, \label{eq:Ap}
\end{align}
where $\bm{a}_L = [a_{1L},\ldots,a_{pL}]^\top$ and 
$\bm{a}_U = [a_{1U},\ldots,a_{pU}]^\top$ denote the vectors of lower and upper bounds, respectively. 
The bounds satisfy $0 \le a_{kL} < a_{kU} \le 1$ for $k=1,\ldots,p$, and the inequalities are understood componentwise.
See \citet[][Chapter~7]{dirichlet} for further details on the truncated Dirichlet distribution.

However, in compositional data analysis it is common to encounter observations with partially unobserved components. 
Accounting for such components is therefore essential when modelling Dirichlet-distributed data. 
Consequently, the model in \eqref{pdf_dirichlet} must be adapted to accommodate incomplete observations. 
To this end, we partition the random vector $\bX$ into its unobserved and observed components,
\[
\bX =
\begin{bmatrix}
\bX_u \\
\bX_o
\end{bmatrix},
\]
where the subscripts $m$ and $o$ denote the missing and observed parts, respectively.
Let $\mathcal{U} \subseteq \{1,\dots,p\}$ denote the index set of unobserved components, with cardinality $p_u = |\mathcal{U}|$, and let $\mathcal{O} = \{1,\dots,p\} \setminus \mathcal{U}$ denote the index set of observed components, with cardinality $p_o = |\mathcal{O}|$. 
Accordingly, $\bX_u$ and $\bX_o$ denote the subvectors of $\bX$ formed by the components indexed by $\mathcal{U}$ and $\mathcal{O}$, respectively.
The missing component $\bX_u$ is assumed to lie in the region
\begin{equation}
\label{eq:Bpu}
\mathbb{B}_u
=
\left\{
\bX_u \in \mathbb{R}_+^{p_u}
:
b_{kL} \le x_k \le b_{kU},
\; k \in \mathcal{U}
\right\},    
\end{equation}

where $b_{kL}$ and $b_{kU}$ denote lower and upper bounds for the $k$-th missing component.
Under this formulation, the conditional distribution of the unobserved component given the observed component as well as the distribution of the observed part (i.e., the marginal) both admit closed forms. Theorem~\ref{dist} presents the resulting distributions for $\bX_u \mid \bX_o$ and $\bX_o$.
%%%%%%%%%%%%%%%%%%
\begin{theorem}
\label{dist}
Let $\bX \sim \bD_{\mathbb{S}_p}(\balpha)$ and partition $\bX$ and $\balpha$ according to the index sets $\mathcal{U}$ and $\mathcal{O}$ as
\[
\bX =
\begin{bmatrix}
\bX_u \\
\bX_o
\end{bmatrix},
\qquad
\balpha =
\begin{bmatrix}
\balpham \\
\balpha_o
\end{bmatrix},
\]
where $p_u = |\mathcal{U}|$ and $p_o = |\mathcal{O}|$.
Define
\[
\balpha_o^*
=
\begin{bmatrix}
\balpha_o \\
\|\balpha\|_1 - \|\balpha_o\|_1
\end{bmatrix}.
\]
Then the marginal density of the observed component $\bX_o$ is given by
\begin{equation}
\label{marginal}
f(\bxo;\balpha)
=
f_{ {\bD_{\mathbb{V}_{p_o-1}}} }(\bxo;\balpha_o^*)\;
F_{\bD_{\mathbb{S}_{p_u}}}\left(\mbmp;\balpha_u \mid \bxo\right),
\qquad \bxo \in \mathbb{V}_{p_o}, 
\end{equation}
where $\mbmp$ is defined as in \eqref{eq:Bpu} and
\begin{equation*}
F_{\bD_{\mathbb{S}_{p_u}}}\left(\mbmp;\balpha_u \mid \bxo\right)
=
\Pr\left(\bX_u \in \mbmp \,\middle|\, \bX_o=\bxo\right).
% &=
% \Pr\left(\frac{\bX_u}{c(\bxo)} \in \mbmpc \,\middle|\, \blue{\bX_o=\bxo}\right).
\end{equation*}
Moreover, letting $c(\bxo)=1-\|\bxo\|_1$, the conditional distribution of the unobserved component satisfies
\begin{equation}
\label{cond}
\frac{\bX_u}{c(\bxo)}
\;\bigg|\;
\bX_o=\bxo
\sim
\bTD(\balpham)
\quad
\text{on } \mbmpc,
\end{equation}
where $\mbmpc$ denotes the region $\mbmp$ rescaled by the closure factor $c(\bxo)$, namely
\[
\mbmpc
=
\left\{
\bz \in \mathbb{S}_{p_u}
:
\frac{b_{kL}}{c(\bxo)} \le z_k \le \frac{b_{kU}}{c(\bxo)},
\; k \in \mathcal{U}
\right\}.
\]
\end{theorem}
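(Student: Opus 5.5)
The plan is to derive both claims from the standard aggregation and neutrality properties of the Dirichlet distribution, which we establish directly from the gamma representation. Write $\bX = \bm{G}/\|\bm{G}\|_1$ with $G_k \sim \mathrm{Gamma}(\alpha_k,1)$ independent. Then aggregate the unobserved parts into a single component $S_u = \sum_{k\in\mathcal{U}} X_k = 1-\|\bX_o\|_1$. Because sums of independent gammas are gamma, the vector $[\bX_o^\top, S_u]^\top$ is Dirichlet on $\mathbb{S}_{p_o+1}$ with parameter $\balpha_o^*$. Dropping the redundant last coordinate gives the density $f_{\bD_{\mathbb{V}_{p_o}}}(\bxo;\balpha_o^*)$ of $\bX_o$, as in \eqref{eqref:Dirichlet2}. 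This is the unrestricted marginal.

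Next we establish the conditional structure. Define $\bm{Z} = \bX_u/S_u = \bm{G}_u/\|\bm{G}_u\|_1$. By the Lukacs property, $\bm{Z}$ is independent of $\|\bm{G}_u\|_1$ and of $\bm{G}_o$, so $\bm{Z}$ is independent of $\bX_o$, and $\bm{Z}\sim\bD_{\mathbb{S}_{p_u}}(\balpham)$. Equivalently, one can verify directly with the change of variables $\bx_u = c(\bxo)\bz$, whose Jacobian is $c(\bxo)^{p_u-1}$ on the simplex. The joint density \eqref{pdf_dirichlet} then factorises as
\[
\prod_{k\in\mathcal{U}} x_k^{\alpha_k-1} = c(\bxo)^{\|\balpham\|_1-p_u}\prod_{k\in\mathcal{U}} z_k^{\alpha_k-1}.
\]
The gamma-function ratio also splits into $\Gamma(\alpha_0)/\big(\Gamma(\|\balpham\|_1)\prod_{o}\Gamma(\alpha_k)\big)$ times the Dirichlet$(\balpham)$ normalising constant. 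So, conditionally on $\bX_o=\bxo$, $\bX_u/c(\bxo)\sim\bD_{\mathbb{S}_{p_u}}(\balpham)$ without any restriction.

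To incorporate the censoring or interval information, we interpret the observed-data density as the joint density of $\bX_o$ and the event $\{\bX_u\in\mbmp\}$. Then
\[
f(\bxo;\balpha)=f(\bxo)\Pr(\bX_u\in\mbmp\mid\bX_o=\bxo),
\]
which is \eqref{marginal}. The event $\bX_u\in\mbmp$ is the same as $\bX_u/c(\bxo)\in\mbmpc$, because the bounds rescale componentwise. Conditioning the unrestricted Dirichlet$(\balpham)$ law of $\bm{Z}$ on $\mbmpc$ gives exactly $\bTD(\balpham)$ on $\mbmpc$, by the definition \eqref{truncated_pdf}. This proves \eqref{cond}, and the probability in \eqref{marginal} equals $F(\mbmpc;\balpham)$.

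The main obstacle is bookkeeping rather than depth. The Jacobian exponent must combine correctly with $\alpha_k-1$ so that the power of $c(\bxo)$ becomes $\|\balpham\|_1-1$, which matches the last coordinate of $\balpha_o^*$. The indexing also needs care: the marginal density lives on $\mathbb{V}_{p_o}$ with $p_o$ free coordinates, whereas the statement writes the subscript as $p_o-1$. We also need to handle the degenerate case $p_u=1$, where $\bm{Z}\equiv1$ and the probability is simply the indicator that $c(\bxo)\in[b_{kL},b_{kU}]$.
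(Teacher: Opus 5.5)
Your proposal is correct and follows essentially the route the paper indicates. Aggregating the unobserved parts gives the Dirichlet$(\balpha_o^*)$ law of $\bX_o$, and the conditional Dirichlet (neutrality) structure makes $\bX_u/c(\bxo)$ a Dirichlet$(\balpham)$ vector independent of $\bX_o$, so conditioning on the rescaled box yields both the truncated law and the probability factor. The paper only names these two properties and defers to an external reference, so your gamma-representation and Jacobian bookkeeping supply the omitted details, as do your reading of the product formula as the joint density of $\bX_o$ and the event $\{\bX_u\in\mbmp\}$ and your correction of the subscript to $\mathbb{V}_{p_o}$.
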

%%%%%%%%%%%%%
\begin{proof}
The result follows from the marginalisation property of the Dirichlet distribution and its conditional Dirichlet structure under compositional constraints; see archive \cite{jasa} for the proof.
\end{proof}

\subsection{Missingness, censoring, and coarsening at random}
\label{types_of_missing}

Understanding how incomplete data are handled in this paper requires distinguishing between two related but different concepts: the mechanism, and the form in which the data are observed. From the motivation around the perseverance of unobserved data discussed in Section \ref{introduction}, compositional data may be unobservable due to a part being too small to measure with current techniques and machinery, or too large that it dominates the rest of the composition. In other instances, parts may be unobserved due to non-response. The former thus, is more informed on the region in which the unobserved value lies compared to the latter. That is, the entire composition is not fully observed, but also not perfectly missing. Thus, the unobserved part falls part of a more general notion, referred to as coarsened data. Further, the setting in which datasets are subject to coarsening is typically a consequence of deterministic thresholds. This implies that any value outside the threshold regions are hidden, regardless of how close or how far said value is to the threshold value. Thus, the coarsening is said to be done at random. Under this assumption, we are afforded two special cases of coarsening: (1) is the typical deterministic censoring as discussed, and (2) a value that is missing at random (MAR). Notice that a value that is missing at random is itself a special, albeit trivial, type of censoring where the threshold values are extended to the endpoints of possible values the random variable can assume or beyond. Thus MAR mechanism (and all its special cases) is contained within the CAR mechanism. Furthermore, the coarsening at random assumption encompasses a special case of coarsening completely at random (CCAR), which similarly to CAR, contains the MCAR missingness as a special case.  The framework thus incorporates various mechanisms behind unobserved values while relying on the standard ignorability conditions required for likelihood-based inference. For further descriptions, see \cite{jasa}.

\section{Parameter estimation}\label{methodology}
When the observed proportions arise from multiple underlying populations, a finite mixture of Dirichlet distributions provides a flexible modelling framework. 
The mixture density is given by
\begin{align}
    f_{M\bD}(\bx;\bpsi) = \sum_{g=1}^G \pi_g f_{\bD_{\mathbb{S}_p}},
    \label{fmm}
\end{align}
where $\bpsi = \{\pi_g,\balpha_g : g=1,\dots,G\}$. 
The mixing proportions satisfy $\pi_g>0$ and $\sum_{g=1}^G \pi_g = 1$, and $f_{\bD_{\mathbb{S}_p}}(\cdot)$ is defined in \eqref{pdf_dirichlet}.

\subsection{EM algorithm: Finite mixtures of Dirichlet distributions}
\label{inference}
A random sample $\mathcal{X}$ that is generated by \eqref{fmm} is incomplete for two reasons: (1) the row potentially lack some observations, and (2) it is not known which of the $G$ components the observation is generated from. There is a latent cluster membership vector $\bz_i = \begin{bmatrix} z_{i1},\dots, z_{iG}\end{bmatrix}^{\top}$ where $z_{ig} = 1$ if $\bx_i$ is generated from the $g^{th}$ component of model \eqref{fmm} and 0 otherwise, subjected to the constraint $||\bz_i||_1 =1 $, for $i=1,\dots,n$. 
Letting $\mathcal{Z} = \{\bz_i: i =1,\dots,n \}$, the complete likelihood is therefore given as
\begin{align}
    \tilde{\mathcal{L}}(\bpsi; \mathcal{X}, \mathcal{Z} ) 
    &= \prod_{i=1}^n \prod_{g=1}^G  \left\{ \pi_g f_{\bD_{\mathbb{S}_p}}(\bx_i;\balpha_g) \right\}^{z_{ig}} & \nonumber\\
    &= \prod_{i=1}^n \prod_{g=1}^G \left\{ \Gamma\left(  \alpha_{0g} \right) \left[\displaystyle\prod_{k=1}^p \Gamma(\alpha_{kg}) \right]^{-1}\displaystyle\prod_{k \in \ms} x_k^{\alpha_{kg} -1} \displaystyle\prod_{k \in \os} x_k^{\alpha_{kg} -1} \right\}^{z_{ig}}. \label{likelihood_fmm}
\end{align}

The complete log-likelihood is then the natural logarithm of \eqref{likelihood_fmm} and is given as
\begin{align}
    \tl_c(\bpsi;\mathcal{X},\mathcal{Z}) & = \sum_{i=1}^n \sum_{g=1}^G z_{ig}\pi_g + \sum_{i=1}^n \sum_{g=1}^G z_{ig}\ln\Gamma\left( \alpha_{0g} \right)  - \sum_{i=1}^n \sum_{g=1}^G\sum_{k=1}^p z_{ig}\ln\Gamma\left( \alpha_{kg}\right) +\sum_{i=1}^n \sum_{g=1}^G\sum_{k \in \ms}\alpha_k z_{ig}\ln x_{ik}  \nonumber\\
    & + \sum_{i=1}^n \sum_{g=1}^G\sum_{k \in \os}\alpha_k z_{ig} \ln x_{ik} -\sum_{i=1}^n \sum_{g=1}^G\sum_{k \in \ms}z_{ig}\ln x_{ik} - \sum_{i=1}^n \sum_{g=1}^G\sum_{k \in \os} z_{ig}\ln x_{ik}.
    \label{llfmm}
\end{align}
The E-step computes $\tilde{Q}(\bpsi ) = \mathbb{E}\left[\tl_c(\bpsi|\mathcal{X},\mathcal{Z})\big|\bpsi^{(r)}, \mathcal{X}_o\right]$ using the parameter updates at the $r^{th}$ iteration, namely $\bm{\bpsi}^{(r)}$. Taking the expected of \eqref{llfmm}, $\tilde{Q}(\bpsi )$ yields:
\begin{align}
\label{Q function fmm}
    \tilde{Q}(\bpsi ) & = \sum_{i=1}^n \sum_{g=1}^G \tilde{z}^{(r)}_{ig}\pi_g + \sum_{i=1}^n \sum_{g=1}^G \tilde{z}^{(r)}_{ig}\ln\Gamma\left( \alpha_{0g} \right)  - \sum_{i=1}^n \sum_{g=1}^G\sum_{k=1}^p \tilde{z}^{(r)}_{ig}\ln\Gamma\left( \alpha_{kg}\right) +\sum_{i=1}^n \sum_{g=1}^G\sum_{k \in \ms}\alpha_{kg} \tilde{z}^{(r)}_{ig}\evg  \nonumber\\
    & + \sum_{i=1}^n \sum_{g=1}^G\sum_{k \in \os}\alpha_{kg} \tilde{z}^{(r)}_{ig} \ln x_{ik} -\sum_{i=1}^n \sum_{g=1}^G\sum_{k \in \ms}\tilde{z}^{(r)}_{ig}\evg - \sum_{i=1}^n \sum_{g=1}^G\sum_{k \in \os} \tilde{z}_{ig}^{(r)}\ln x_{ik},
\end{align}
with the expected values of the two unobserved quantities, for $i=1,\dots,n$ and $g=1,\dots,G$: $ \tilde{z}_{ig}^{(r)} = \mathbb{E}[Z_{ig}| \bxoi, \bpsi^{(r)}, \mbpi]$ and $\evg = \mathbb{E}[\ln X_{u,ik}|z_{ig}, \bxoi, \bpsi^{(r)}, \mbpi]$. 

As in the classical use of the EM algorithm for finite mixture models, we have:
\begin{align}
\label{ez}
    \tilde{z}_{ig}^{(r)} = \mathbb{P}\left(z_{ig}=1|\bxoi, \bpsi^{(r)} \right) = \frac{ \pi^{(r)}_g f_{\bD_{\mathbb{S}_{p_o}}}\left(\bxoi; \balpha_g^{(r)} \right)F\left( \mbpi ; \balpha_{u,g}| \bxoi\right)}{ \displaystyle \sum_{g=1}^G \pi^{(r)}_g f_{\bD_{\mathbb{S}_{p_o}}}\left(\bxoi; \balpha_g^{(r)} \right)F\left( \mbpi ; \balpha_{u,g}| \bxoi\right)},
\end{align}
where $f_{\mathcal{D}}(\cdot)$ is given in \eqref{pdf_dirichlet}, $F\left( \mbpi ; \balpha_{u,g}| \bxoi\right)$ is the probability that $\bX_{m,i}$ falls within the censored region, $\mbpi$, conditioned on the observed component, $\bxoi$ and $\balpha_{u,g}^{(r)}$ is a vector of parameters from $\balpha_g^{(r)}$ corresponding to the unobsserved entries of $\bx_i$. In other words, the expected value given by \eqref{ez} is the posterior probability that $\bx_i$ belongs to the $g^{th}$ cluster at the $r^{th}$ iteration. Notice that if $z_{ig} =1$ it is then known that $\bx_i$ is generated from the $g^{th}$ component of model \eqref{fmm}. Therefore, the distribution of the unobserved part of $\bx_i$ has the following distribution, when conditioned on the observed part, $\bxoi$ and the cluster membership $z_{ig}$:
\begin{align*}
    \frac{\bX_{u,i}}{1 - ||\bxoi||_1} | \bxoi, z_{ig}=1  \sim \bTD_{\mbmpc}(\balpha_{u,g}),
\end{align*} 
for $i = 1,\dots,n$ and $g=1,\dots,G$ (refer to \eqref{truncated_pdf}). 
The derivation of $\evg$ now follows. From Theorem \ref{dist}, we have that 
\begin{align*}
    \evg =  \ln \left( 1 - ||\bxoi||_1 \right) + \frac{\partial {\ln} F\left(\mbmpci;\balpha_{u,g}^{(r)}\right)}{\partial\alpha_{kg}}  + \psi\left(\alpha_{kg}^{(r)}\right) - \psi\left(\left|\left|\balpha_{u,g}^{(r)}\right|\right|_1\right),
\end{align*} 
for $i = 1,\dots,n,~g=1,\dots,G,$ and $k \in \ms$. The M-step at the $(r+1)^{th}$ iteration maximises \eqref{Q function fmm} with respect to $\pi_g$ and $\balpha_g$. Derivatives with respect to the former does not involve the latter and vice versa, so \eqref{Q function fmm} can be maximised with respect to $\pi_g$ separately from $\balpha_g$. Maximising $\tilde{Q}$ with respect to $\pi_g$, subjected to the constraint mentioned in \eqref{fmm} updates $\pi_g^{(r+1)}$ as
\begin{align}
    \pi_g^{(r+1)} = \frac{ n_g }{n},
\end{align}
where $ n_g = \displaystyle\sum_{i=1}^n \ez $, for $g=1,\dots,G$. We transform the parameters for the $g^{th}$ cluster as follows, for $g=1,\dots,G$:
\begin{align}
    e^{\beta_{kg}} = \alpha_{kg}  \iff \ln \alpha_{kg} =  \beta_{kg},
\end{align}
where $\beta_{kg}$ is now an unconstrained parameter for $k = 1,\dots,p$ and $g=1,\dots,G$. \label{sup_mat}
The $k^{th}$ element of the unconstrained gradient, for the $g^{th}$ component of \eqref{fmm} is:
\begin{align}
\label{grad_beta_fmm}
      \nabla_{kg} = \frac{\partial \tilde{Q}(\bm{\eta})}{\partial\beta_{kg}} =  n_ge^{\beta_{kg}}\psi\left(\sum_{k=1}^p e^{\beta_{kg}} \right) - n_g e^{\beta_{kg}}\psi\left( e^{\beta_{kg}} \right) + e^{\beta_{kg}} \sum_{i=1}^n \ismiss \ev_{ik} \tilde{z}_{ig}^{(r)} + e^{\beta_{kg}} \sum_{i=1}^n \isobs\tilde{z}_{ig}^{(r)}\ln x_{ik} ,
\end{align}
for $g=1,\dots,G$. The second and mixed derivatives are:
 \begin{align*}
 \frac{\partial \tilde{Q}(\bm{\eta})}{\partial\beta_{jg}\partial \beta_{kg}} = 
     \begin{cases}
         n_g e^{\beta_{jg} + \beta_{kg}}\psi'\left(\displaystyle\sum_{k=1}^pe^{\beta_{kg}} \right) & \text{ if } j \ne k \\
          n_g e^{2\beta_{kg}}\psi'\left(\displaystyle\sum_{k=1}^pe^{\beta_{kg}} \right) -n_g e^{2\beta_{kg}} \psi'\left(e^{\beta_{kg}}\right) + \nabla_{kg}& \text{ if } j = k,
     \end{cases}
 \end{align*}
 for $g=1,\dots,G$. Allowing for the following notation at the $s^{th}$ iteration:

 \begin{align*}
     \gamma_g^{(s)}      & = n_g \psi'\left(\displaystyle\sum_{k=1}^p \ebg \right),\nonumber \\
     \nabla_{kg}^{(s)}  & = n_g\ebg\psi\left(\sum_{k=1}^p \ebg \right) - n_g\ebg \psi\left(\ebg \right)  + \ebg\sum_{i=1}^n \ismiss\ez \ev_{ik} + \ebg\sum_{i=1}^n \isobs \ez \ln x_{ik},\nonumber\\
     \bm{\beta}_{eg}^{(s)}   & =  \begin{bmatrix} e^{\beta_{1g}^{(s)}}, \dots, e^{\beta_{pg}^{(s)}} \end{bmatrix}^{\top},  \text{ and}&\\
     \bm{D}_g^{(s)} & = diag \left[ \nabla_{1g}^{(s)} - n_g e^{2\beta_{1g}^{(s)}} \psi'\left(e^{\beta_{1g}^{(s)}}\right), \dots,  \nabla_{pg}^{(s)} - n_g e^{2\beta_{pg}^{(s)}} \psi'\left(e^{\beta_{pg}^{(s)}}\right) \right],&
 \end{align*}
In the NR algorithm, $\bm{\beta}_g$ is updated via the NR algorithm through the following formula for $g=1,\dots,G$:
\begin{align}
\label{nr_start_fmm}
    \bm{\beta}_g^{(s+1)} = \bm{\beta}_g^{(s)} - (\bH_g^{(s)})^{-1}\bm{\nabla}_g^{(s)},
\end{align}
where 
 \begin{align*}
     \bH_g^{(s)} = \bm{D}_g^{(s)} + \gamma^{(s)}_g \bm{\beta}_{eg}^{(s)} \bm{\beta}_{eg}^{(s) \top}.
 \end{align*}
 It's corresponding element-wise update is derived as:
\begin{align}
    \beta_{kg}^{(s+1)} = \beta_{kg}^{(s)} - \frac{ \nabla^{(s)}_{kg} -w_g^{(s)} e^{\beta_{kg}^{(s)}} }{ d_{kkg}^{(s)}  },
\end{align}
where 
\begin{align}
      w_g^{(s)}    & = \frac{ \displaystyle \sum_{k=1}^p \frac{\ebg \nabla_{kg}^{(s)} }{d^{(s)}_{kkg}  } }  {\frac{1}{\gamma_g^{(s)} } +\displaystyle \sum_{k=1}^p \frac{e^{2\beta_k^{(s)}   } }{ d^{(s)}_{kk}} }. \nonumber
\end{align}
At convergence of the NR algorithm, $\alpha_{kg}$ is updated as:
\begin{align*}
    \alpha^{(r+1)}_{kg} = \ebg.
\end{align*}

\subsection{Initialisation and convergence}
\label{initialisation}

The Newton--Raphson (NR) algorithm requires suitable starting values in order to maximise the chances of convergence to the global maximum of the log-likelihood. 
When the NR algorithm is applied directly to the Dirichlet parameters $\balpha$, 
it has been noted that setting each $\alpha_k$ equal to the minimum value of the 
$k^{\text{th}}$ data column prevents the algorithm from producing negative updates 
when the method-of-moments estimates are close to zero \citep{initialisation_roning}.
In contrast, under the $\bbeta$-reparameterisation used in the M-step, this difficulty does not arise because the optimisation is carried out in a different parameter space.As a result, the method-of-moments estimates provide simple and reliable starting values for the NR algorithm \citep{estimation_NR}. This moment-based initialisation requires only the first two raw moments of each component of the Dirichlet distribution. Let $\mathcal{O}_k = \{i : x_{ik} \text{ is observed}\}$ denote the set of observations for which the $k^{\text{th}}$ component is available, and let $n_k = |\mathcal{O}_k|$. 
The $j^{\text{th}}$ raw sample moment of the $k^{\text{th}}$ component is computed using the available observations as
\begin{align}
\overline{x}_{jk}
=
\frac{1}{n_k}
\sum_{i \in \mathcal{O}_k} x_{ik}^{\,j},
\qquad j=1,2 .
\end{align}
The method-of-moments estimate of the $k^{\text{th}}$ element of $\balpha$ is then given by \cite{dirichlet}
\begin{align}
\overline{\alpha}_k =
\frac{\overline{x}_{11}-\overline{x}_{21}}
{\overline{x}_{21}-\overline{x}_{11}^2}
\,\overline{x}_{1k},
\qquad k=1,\dots,p .
\end{align}

Two stopping rules must be specified for the proposed algorithm. 
An inner stopping rule controls the convergence of the Newton--Raphson (NR) procedure used within the M-step, while an outer stopping rule determines the convergence of the overall EM algorithm.

For the NR algorithm, convergence is monitored using the quantity $\frac{1}{2}\lambda_g^2$  where:
\[ \lambda_g  = \sqrt{ \left(\Delta \bm \beta_g^{(s+1)}  \right)^{\top} \bm H^{(s+1)} \Delta \bm\beta_g^{(s+1)}    },
\]
with $\Delta \bm \beta_g^{(s+1)} = \bm \beta_g ^{(s+1)} - \bm \beta_g ^{(s)}$ and $\bm H^{(s+1)}$ is the Hessian matrix evaluated at $\bm \beta_g ^{(s+1)}$. The criterion, $\frac{1}{2}\lambda_g^2$ provides an approximation to the distance from the current objective function value to its supremum, namely $\left| Q(\bm{\beta}_g) - \underset{\bm\beta}{\mathrm{sup}}Q(\bm\beta)\right|$. Thus, iterations are stopped once $\frac{1}{2}\lambda_g^2$  falls below a small tolerance when iterating over the estimator for the $g^{th}$ cluster.
The observed-data log-likelihood increases monotonically under the EM algorithm. However, temporary stability may occur when the algorithm approaches a local maximum before moving towards the global maximum. To assess convergence to the asymptotic log-likelihood, we employ the Aitken acceleration criterion. Let $l_o^{(r)}$ denote the observed log-likelihood at the $r^{\text{th}}$ EM iteration. The Aitken acceleration is defined as
\begin{align*}
a^{(r+1)} =
\frac{l_o^{(r+2)}-l_o^{(r+1)}}{l_o^{(r+1)}-l_o^{(r)}} .
\end{align*}
The corresponding estimate of the asymptotic log-likelihood is
\begin{align*}
(l_o^{\infty})^{(r)} =
l_o^{(r+1)} +
\frac{l_o^{(r+2)}-l_o^{(r+1)}}{1-a^{(r+1)}} .
\end{align*}
The EM algorithm is considered to have converged when
\[
(l_o^{\infty})^{(r)}-l_o^{(r+1)} < \epsilon,
\]
where $\epsilon>0$ is a small tolerance \citep{convergence_aitken}.

\section{Simulation experiment}
\label{simulations}
The goal of the simulation experiment is to evaluate two aspects of the proposed algorithm: (1) the clustering capability of incomplete compositions, and (2) the model selection performance. From the E-step in Section \ref{inference}, the proposed algorithm is capable of fitting a mixture of Dirichlet distributions on data with increasingly more complex levels of incompleteness ranging from CCAR to censored entries. Accordingly, the simulation design examines the performance of the algorithm’s estimators when the data contain CCAR values and censored observations. In the experiment, datasets, each of size $n=100$, are generated as a mixture of $G=4$ Dirichlet distributions. The choice of clusters is to reflect the results obtained in the application in Section \ref{application}. Similarly, the parameter choice is computed according to the degree of pairwise cluster overlap apparent in the data applications. The dimension of the data's rows are varied from $p=6$ to $p=10$ to capture the clustering and model selection capabilities between the two datasets. The ability of the algorithm to accurately cluster incomplete observations using the model given in \eqref{fmm} is investigated, and the experiment examines the algorithm's ability to recover the true number of clusters present in the dataset via different model selection criteria. Thus, the experiment is divided into two parts, namely A and B:\\
\begin{itemize}
    \item In Part A, 1000 datasets are simulated, in which a percentage of entries are hidden completely at random for each dataset. The proposed algorithm is then fitted and evaluated on how accurately it could cluster the incomplete observations. Part A concludes with reporting the proportion of datasets the model selection metrics could correctly identify 4 components, and this is compared to that of mixtures fitted on the data's subset of complete observations.
    \item In Part B, clustering performance and model selection is assessed as in Part A, but 100 datasets simulated and subjected to censoring at varying quantiles.
\end{itemize}
Measuring clustering performance relies on how well the model is at classifying an observation to its correct cluster. Thus, the  simulation computes the clustering accuracy for each scenario. Of course, the accuracy does not take into account errors due to label-switching and classifying a point correctly by chance. Thus, the Adjusted Rand Index (ARI) is also computed to address these possible confounders. The ARI has a value equal to 0 when classification is no better than can be expected by random agreement between two partitions, and a value equal to 1 when there is a perfect agreement \citep{ari_review, ari_source}. The accuracy score here is defined as the proportion of rows correctly classified. The ARI's expression can be found in \citep{ari}, and the \texttt{ARI} function in the \texttt{MixGHD} package is used in this study. At convergence, we can establish whether an observation belongs to a component via the maximum a posteriori (MAP) probabilities. Letting $\hat{z}_{ig}$ denote $\ez$ at convergence, the predicted cluster membership of observation $\bx_i$ is determined as:
\begin{align}
\label{responsibilities}
    \hat{z}_i =\underset{g}{ \mathrm{argmax}}~\hat{z}_{ig},
\end{align}
In practice, the true number of components are unknown and would have to be inferred via model selection metrics. Thus, this simulation determines how often model selection criteria would be able to successfully determine the correct number of components present in the dataset. For each scenario, the proportion of datasets that were successfully identified as having four components by the: Akaike Information Criterion (AIC), Bayesian Information Criterion (BIC), and Integrated Classification Likelihood (ICL) are computed. The metrics are defined in \eqref{gof_metrics}, where $K$ denotes the number of free parameters:

\begin{align}
    \label{gof_metrics}
    AIC = 2l_o -2K \hspace{0.7cm} BIC = 2l_o -K\ln n \hspace{0.7cm} ICL = BIC + 2\displaystyle\sum_{i=g}^G \sum_{i=1}^n \hat{z}_{ig} \ln \hat{z}_{ig}  
\end{align}

\subsection{Simulation results: Part A}
For each dataset generated, values for the first $p-1$ columns are hidden according to the percentages 0\%, 10\%, 20\%, $\dots$, and 90\% for $p=6,\dots 10$. For each percentage of values hidden, the average of the resulting 1000 metrics are reported. 

\begin{figure}[H]
    \centering
    \includegraphics[ width=0.8\linewidth]{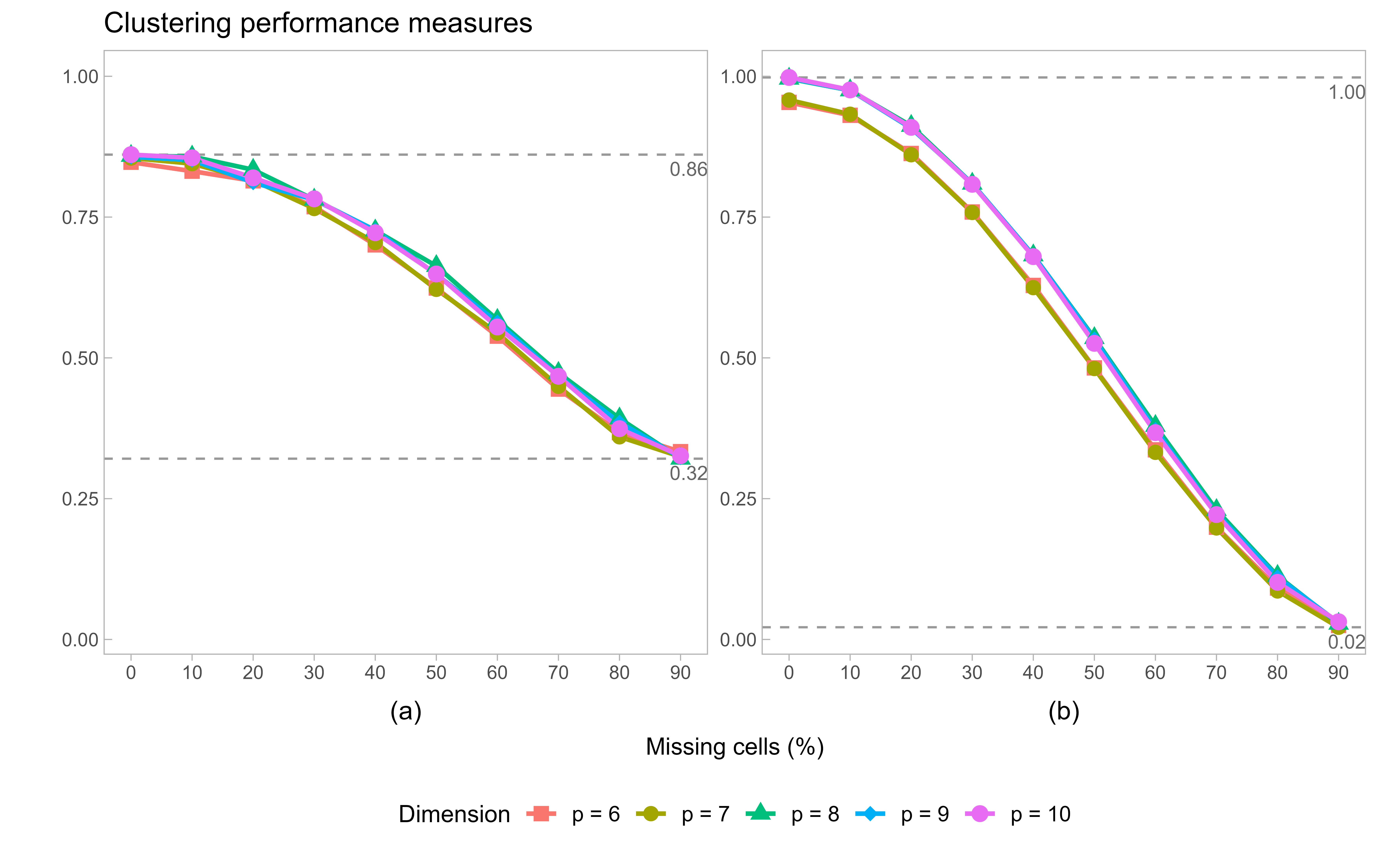}
    \caption{Cluster performance metrics: (a) Average accuracy scores and (b) average ARI scores for simulation Part A.}
    \label{acc_ari_plot}
\end{figure}

\begin{figure}[H]
    \centering
    \includegraphics[trim={0cm 2cm 0cm 2.5cm}, clip, width=\linewidth]{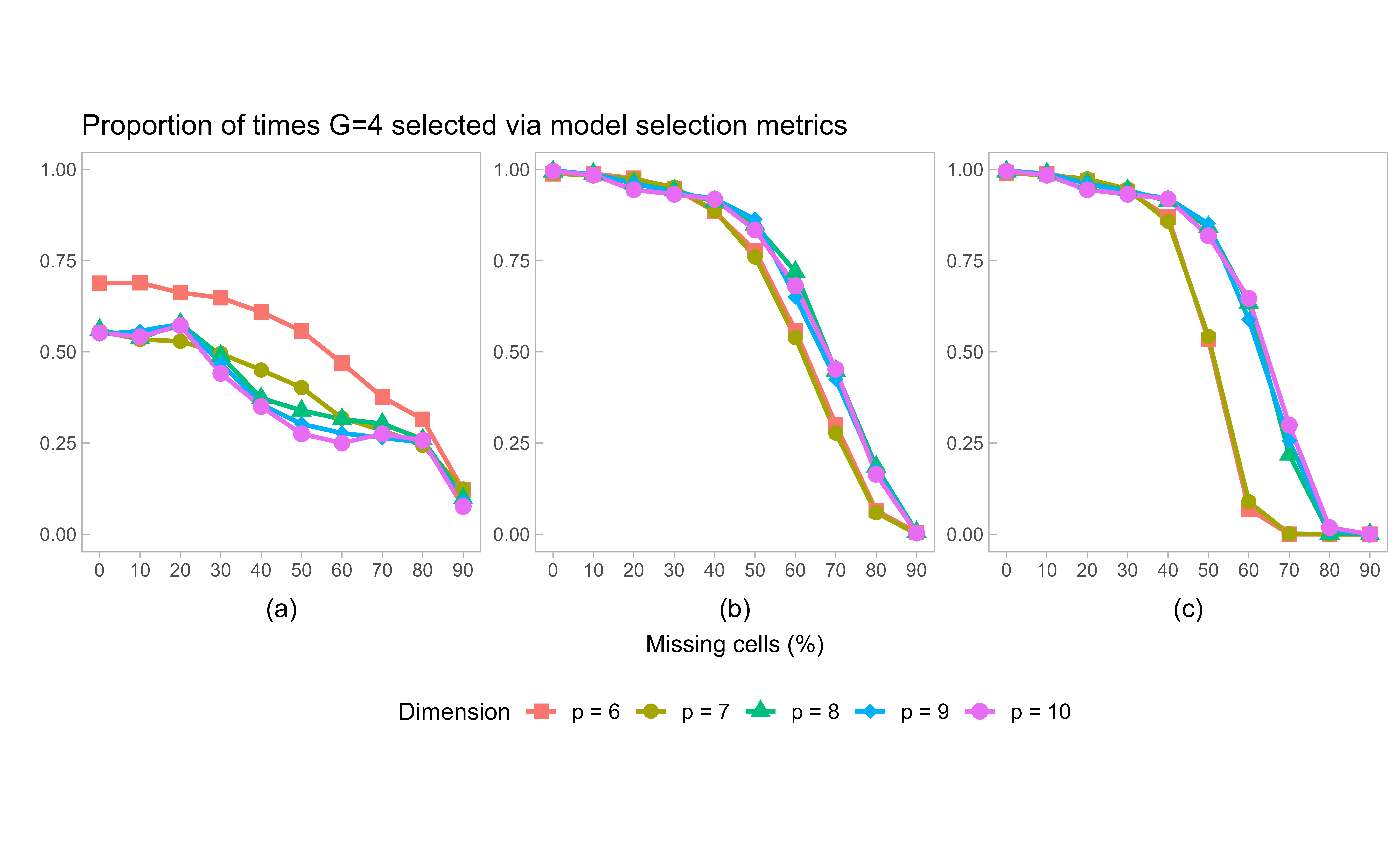}
    \caption{Probability of model selection metric identifying four components present in the dataset: (a) AIC, (b) BIC, and (c) ICL.}
    \label{gof_plot}
\end{figure}

Naturally, higher percentages of values missing corresponds to less accurate results overall. In terms of clustering capabilities, from \figurename \ref{acc_ari_plot} there is a near logistic decay in cluster accuracy and ARI. Recall that an observation belongs to a four component mixture model and thus has a 0.25 probability of being assigned to the correct component at random. Thus, it becomes a lot more challenging for an observation to be correctly classified the more sparse the data becomes. At their extreme of 90\%, the accuracy is 0.32 and the ARI at 0.02. This shows that the chance of accurately classifying an observation is aided by chance when the data is highly sparse. This is also expected since the appearance around the unobserved values is assumed to be independent of the corresponding probability distribution. Crucially, this shows that all observed rows could still be classified despite being incomplete with non-zero accuracy and non-zero ARI.  

\begin{figure}[H]
    \centering
    \includegraphics[trim={0cm 2cm 0cm 2.5cm}, clip, width=\linewidth]{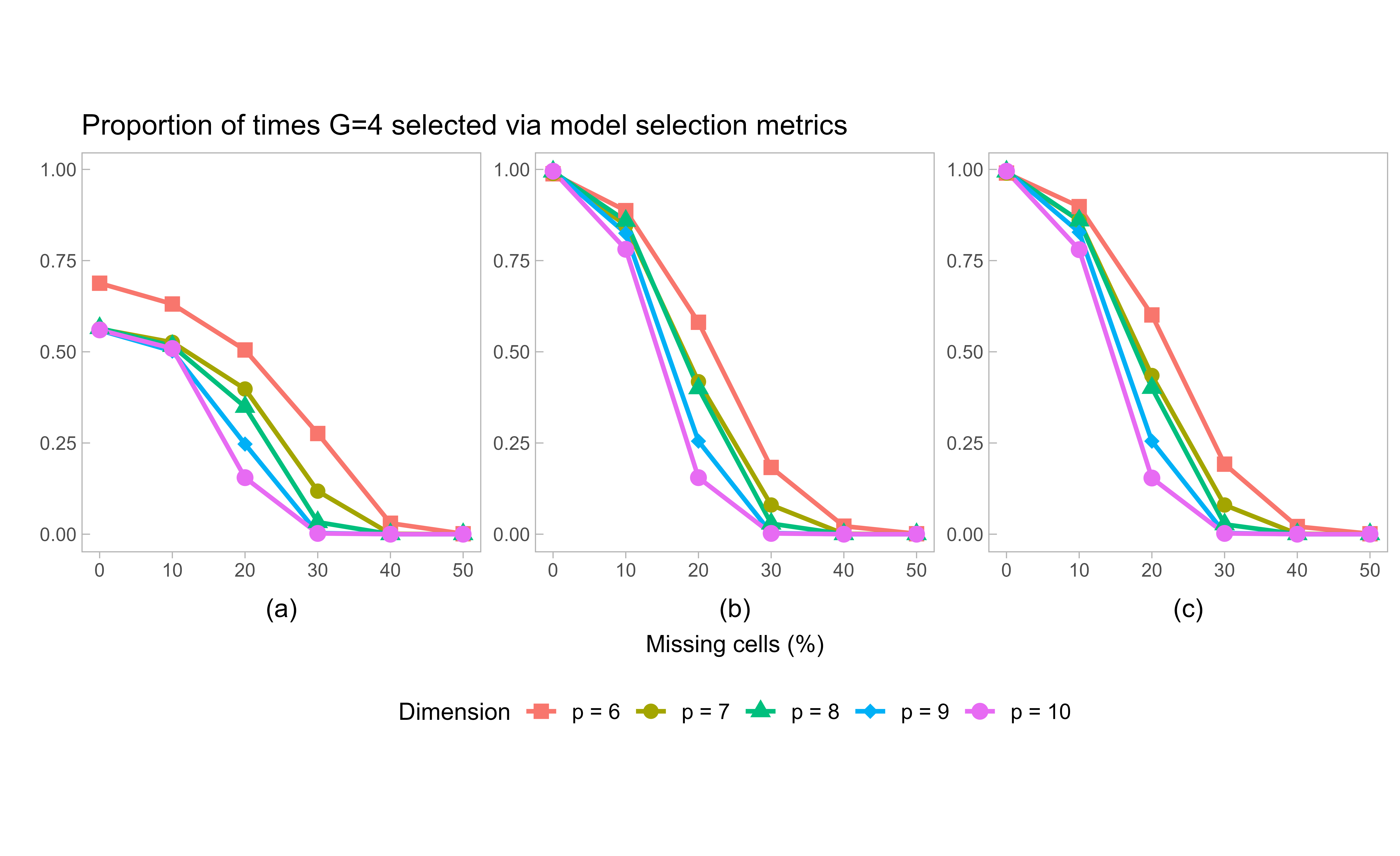}
    \caption{Probability of model selection metric identifying four components present in the dataset when only using a complete subset: (a) AIC, (b) BIC, and (c) ICL.}
    \label{gof_plot_complate_car}
\end{figure}

We now focus on model selection capabilities. In the simulation design, a range of models is fitted on the $b^{th}$ dataset as one would have done in practice, when the number of components are unknown. The range of components as candidates are $1,\dots, p-1$. The reason behind this range is due to the problem with identifiability. In general, mixtures of Dirichlet distributions are not necessarily identifiable. However, it is proven in \citep{identifiability} that a finite mixture of Dirichlet distributions are identifiable when the candidate number of components are strictly fewer than the dimension of the random vector. For this reason, we follow the rule to only fit up to one fewer than the dimension of the observed vector in both simulation experiments and in the data application.

Recording the proportion of datasets for which popular model selection metrics successfully identified the correct number of components, \figurename \ref{gof_plot} displays a surprisingly different trend compared to the clustering performance in \figurename \ref{acc_ari_plot}. Notably, the AIC does not decay from 1, compared to the BIC and ICL metrics. This is to do with the harshness of the penalty inflicted by the metric. It is known that the AIC has a relatively weak penalty on the complexity of a model. Thus, it tends to favour a bigger number of components, which is apparently beneficial for lower dimensional data, namely $p=6$. This leniency also explains why its decay is not as stark as the BIC and ICL measures. 

The BIC and ICL has similar trends. Up to 40\% of entries unobserved, the model selection metrics seem to perform extremely well. A sharp drop in accuracy follows. Notably, these two metrics are more stringent in the penalisation and thus, as the data becomes more sparse, the BIC and ICL tend to favour fewer clusters. Interestingly, the ICL trend forks for more than 40\% of unobserved entries. For fewer dimensions such as $p=6$ and $p=7$ the ICL's penalty is unfittingly harsh, relative to the BIC. This can be explained by the type of penalty the ICL implements. From \eqref{gof_metrics}, the ICL is built from the BIC where another penalty is imposed thereon, based on cluster separability. Hence, it seems, at least for Dirichlet distributed observations, the ICL is best suited for larger dimensions, where visual separation is not as easy to inspect. 

While deletion may not be able to cluster unobserved components, strictly speaking, it can still be used for model selection. Thus, we also compute the proportion of datasets for which a complete subset is used to fit model \eqref{fmm}. The horizontal axes of \figurename \ref{gof_plot_cens_complete} end at 50\% since a four component version of model \eqref{fmm} could not successfully be fitted for datasets with higher percentages of missing values. Notice the difference in the trends relative to \figurename \ref{gof_plot}: in contrast the the excellent performance of BIC and ICL using the proposed algorithm, case deletion lowers the accuracy of these metrics considerably. The probability of these metrics correctly selecting a four component mixture model drops to 0 at only 50\%. 

\subsection{Simualtion results: Part B}

For each dataset generated, values for the first $p-1$ columns are hidden according to the percentages 0\%, 10\%, 20\%, $\dots$, and 90\% for $p=6,\dots, 10$ as in Part A, but under a censoring mechanism. For each percentage of values hidden, the average of the resulting 100 metrics are reported. 

\begin{figure}[H]
    \centering
    \includegraphics[ width=0.8\linewidth]{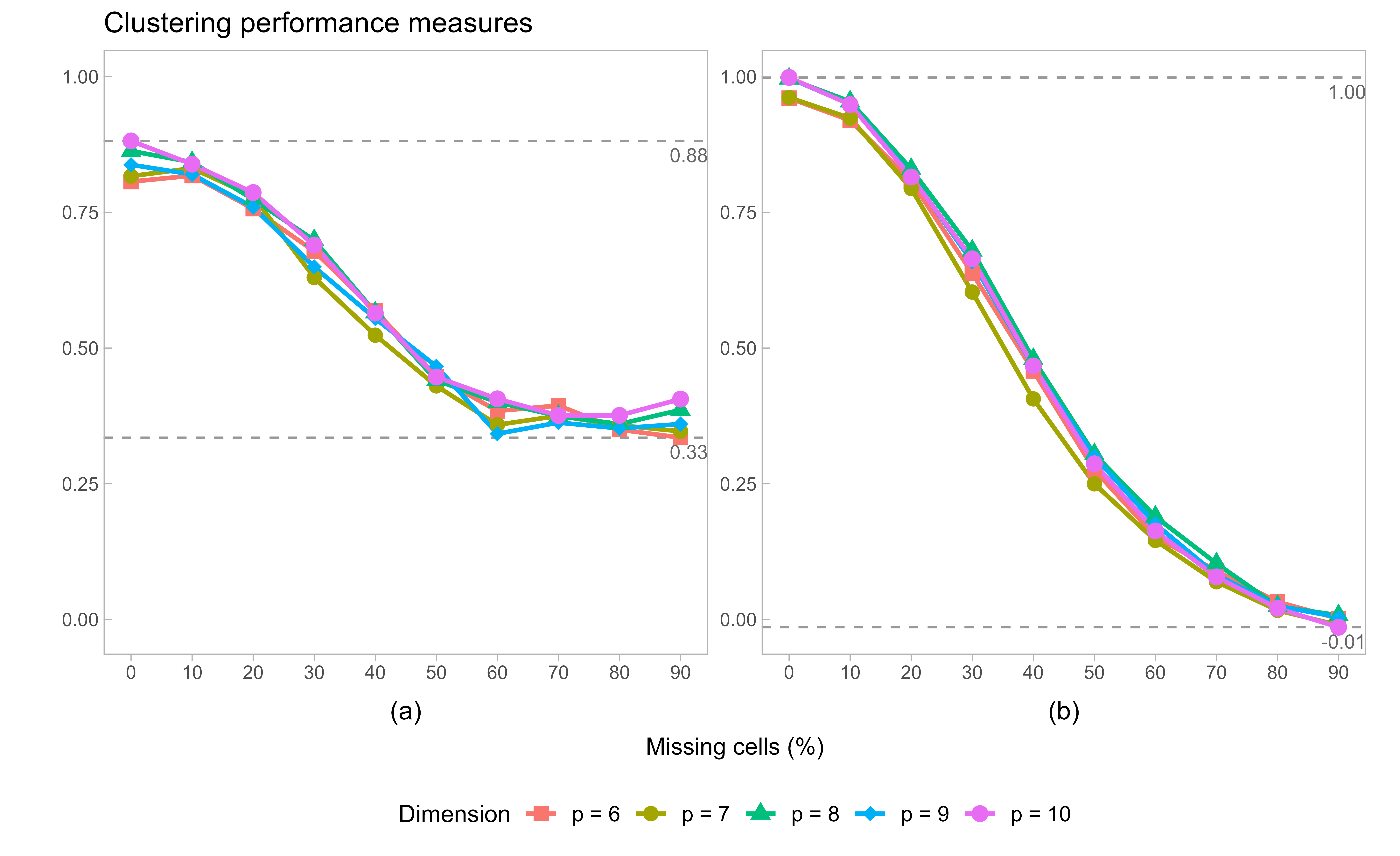}
    \caption{Cluster performance metrics: (a) Average accuracy scores and (b) average ARI scores for simulation Part A.}
    \label{acc_ari_plot_cens}
\end{figure}

The trend exhibited in \figurename \ref{acc_ari_plot_cens} mimics what is seen in \figurename \ref{acc_ari_plot}, even staying in the similar bounds. However, the decay in somewhat faster in the former. This can be explained by the idea that censoring design is a harsher coarsening compared to the MAR design in Part A. However, the outcome from the model fitting process in the simulation experiment is the same. In other words, the proposed algorithm is sufficiently equipped to handle both censored and MAR types of missing values. ARI values can be negative, which occurs when the clustering technique performs worse than random chance. For the extreme case of 90\% of values censored, the average ARI is -0.01. Though, in practical terms, the corresponding average accuracy is at 0.33, suggesting that random chance is fractionally more informative at clustering at best in this case.  
\begin{figure}[H]
    \centering
    \includegraphics[trim={0cm 2cm 0cm 2.5cm}, clip, width=\linewidth]{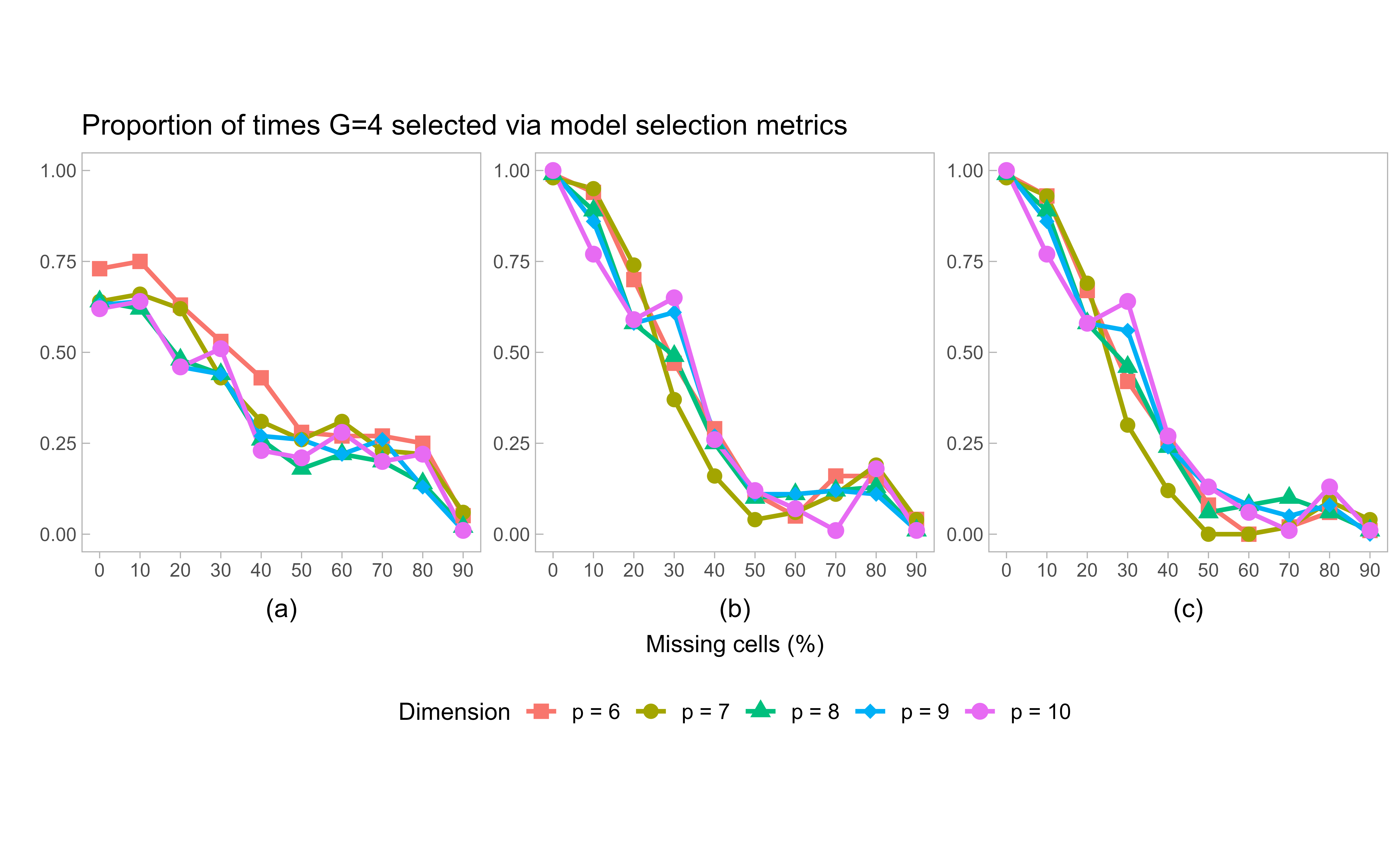}
    \caption{Probability of model selection metric identifying four components present in the dataset: (a) AIC, (b) BIC, and (c) ICL.}
    \label{gof_plot_cens}
\end{figure}

In determining the proportion of times a four component model is selected via selection metrics \eqref{gof_metrics}, \figurename \ref{gof_plot_cens} shows that censoring is a more difficult case to select the correct number of components. Again the decay in probability over an increasing percentage of censored values is more drastic compared to the MAR scenario in Part A. The trend is not as smooth since 100 datasets were generated instead of 1000, simply due to computing constraints. The proposed algorithm requires longer to fit model \eqref{fmm} onto censored data, especially since it is generalised to handle arbitrary feasible regions. That is, it is possible for the proposed algorithm to accommodate each row in a dataset to be subjected to different censored bounds. Despite the steeper decay in \figurename \ref{gof_plot_cens}, the algorithm provides a higher chance of the correct number of components being selected compared to case deletion in \figurename \ref{gof_plot_cens_complete}. At 20\% of censored entries, row deletion has close to a 0 probability of selecting the correct model, thus highlighting the improvement offered by the proposed algorithm.

\begin{figure}[H]
    \centering
    \includegraphics[trim={0cm 2cm 0cm 2.5cm}, clip, width=\linewidth]{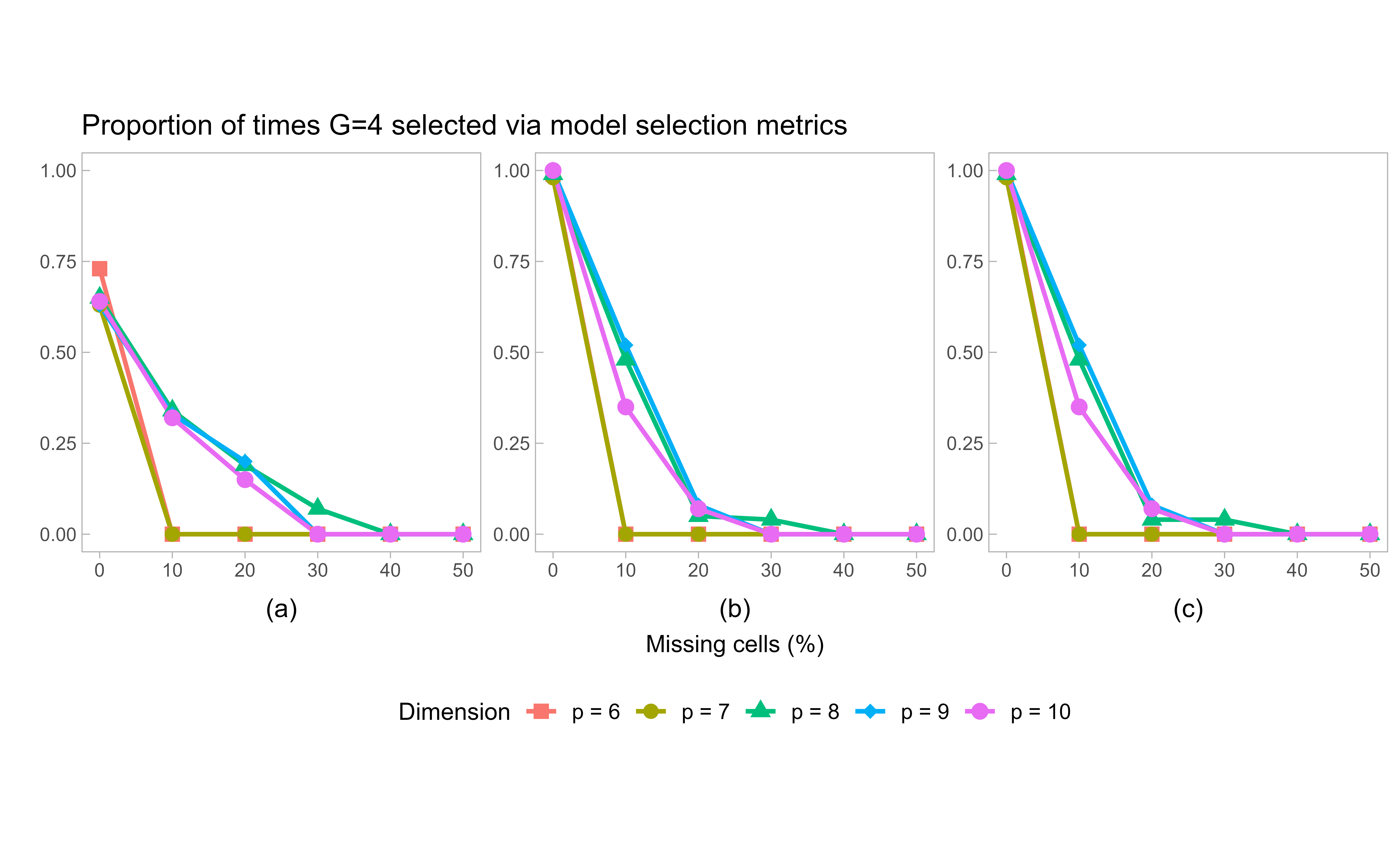}
    \caption{Probability of model selection metric identifying four components present in the dataset when using the complete subset: (a) AIC, (b) BIC, and (c) ICL.}
    \label{gof_plot_cens_complete}
\end{figure}

\section{Application}
\label{application}

The applications demonstrate the usefulness of the algorithm when applying the finite mixtures of Dirichlet to gather insights on compositional data, and its effectiveness at identifying clusters within incomplete data. It considers two datasets: A compiled EarthChem geochemistry dataset and the Air Quality Systems dataset. Both datasets exhibit special types of CAR mechanisms. %The former has values missing at random and the latter has a mix of values that are censored and are missing at random.

\subsection{EarthChem geochemistry data}
EarthChem is a community-driven data respository that promotes the accessibility, interoperability, and reuse of geochemical data by aggregating several major geochemical databases, including PetDB, GEOROC, and the USGS National Geochemical Database. It offers a domain-specific "Data-to-go" compilation derived from PetDB, a curated database of geochemical measurements from igneous and metamorphic rocks compiled primarily from the published scientific literature. The mantle xenoliths dataset is used to illustrate the estimation method discussed in Section \ref{inference}.

The selected dataset consists of major oxide compositions, trace element concentrations, isotope measurements, and associated metadata for mantle xenolith samples. Mantle xenoliths are fragments of the Earth's mantle transported to the surface by volcanic activity and provide direct insight into the composition and evolution of the upper mantle. Consequently, the chemical classification of these rocks are used to understand mantle processes, tectonic evolution, and the formation of the Earth's lithosphere \citep{xenolith}.

This dataset is particularly well suited to the proposed algorithm because the geochemical measurements are compositional, with oxide concentrations representing parts of a whole and therefore constrained to sum to unity after some data cleaning. Furthermore, geochemical datasets commonly exhibit heterogeneity arising from the wide range of factors, such as place, age, and disruption, making them suitable for applications of finite mixture models.  The accompanying metadata, including rock type and tectonic setting, are not used during model fitting but provide an independent means of assessing the geological agreement and interpretability of the resulting clusters. \footnote[1]{PetDB Team, T., 2019. EarthChem Data-To-Go: Geochemical Data for Mantle Xenoliths, version February 2019, Version 1.0. Interdisciplinary Earth Data Alliance (IEDA). https://doi.org/10.1594/IEDA/111309. Accessed 29-06-2026.}
%\begin{table}[ht]
%    \centering
%    \caption{Percentage of unobserved values by molecule.}
%    \begin{tabular}{lSSSSSSSS}
%        \toprule
%        & {$\text{SiO}_2$} & {$\text{Al}_2\text{O}_3$} & {CaO} & {$\text{Cr}_2\text{O}_3$} & {$\text{Fe}_2\text{O}_{3\mathrm{T}}$} & {MgO} & {$\text{Na}_2\text{O}_2$} & {Residual} \\
%        \midrule
%        Missing (\%) & 4.505929 & 1.185771 & 4.743083 & 44.426877 & 83.399209 & 3.003953 & 27.826087 & 96.758893 \\
%        \bottomrule
%    \end{tabular}
%    \label{missing_earthchem}
%\end{table}
This dataset contains major oxides, trace element, and isotope compositional data of all Mantle Xenolith samples including reference information and metadata such as analytical method used to speciate the data, the sample type, the expidition it was collected, and the article it was used in. For this application, we focus on compilation of the major oxides. Aggregation on such a large scale undoubtedly introduces missing values. After filtering to the X-ray Fluorescence (XRF) method of speciation, and removing problematic rows with measurements that were not constrained to unity, the dataset consisted of $n=1256$ rows, $p=8$ major oxides, and with an overall percentage of cells unobserved at 33.2312\%.

\begin{figure}[H]
    \centering
    \includegraphics[width=0.6\linewidth]{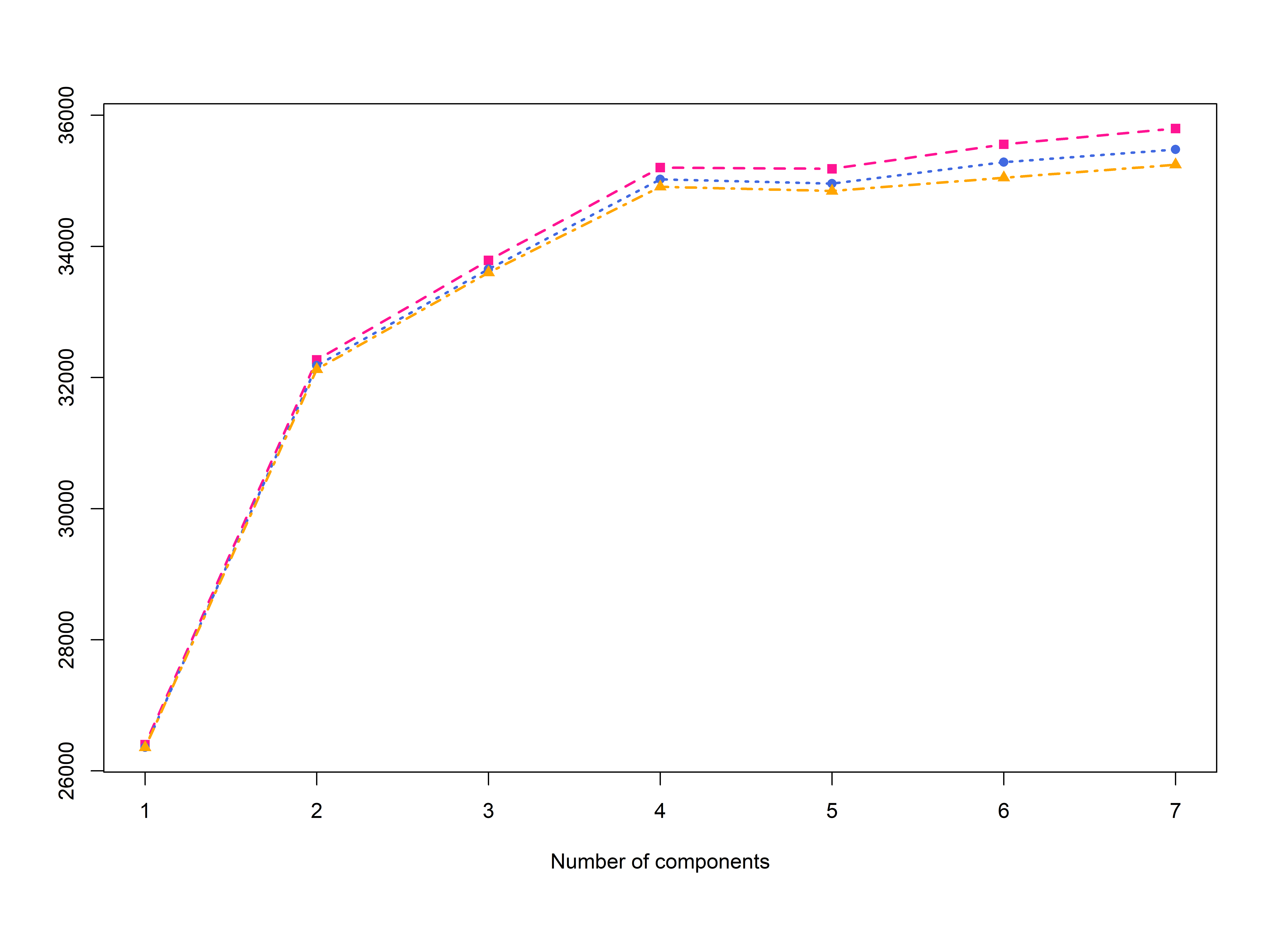}
    \caption{Model selection metrics for model \eqref{fmm} fitted for candidate components $G=1,\dots,7$. }
    \label{model_selection_plot}
\end{figure}

After fitting model \eqref{fmm} with $G=1,\dots,7$ components, the model selection criteria in \figurename \ref{model_selection_plot} display an interesting pattern. As anticipated from the simulation study, the AIC favours the seven-component model. Although both the BIC and ICL also obtain their maximum values at G=7, the increase in the ICL beyond four components is noticeably more subdued. This behaviour is consistent with the increasing strength of the complexity penalty, which is smallest for the AIC, followed by the BIC, and largest for the ICL. 

Focusing on the increase in these metrics, shows a bend at $G=4$ components is present. Inspection of the parameter estimates for the models with $G=5,6$, and $7$, shows that, beyond the four largest components, the estimated mixing probabilities are all of order $10^{-5}$ or smaller. That is, these additional components are of negligible size, while their corresponding density parameters become extremely large. The resulting component densities are therefore highly concentrated around only a handful of observations, producing peaks that inflate the log-likelihood. This is illustrated in \figurename \ref{parameter_norms}: The added natural logarithm of the estimated parameters' 1-norms, shown in \figurename~\ref{alpha_norms}, increases sharply once more than four components are fitted. For $G>4$, the parameter magnitudes more than double before eventually reaching a plateau due to the limits of machine precision. Simultaneously, \figurename~\ref{prob_norms} shows that the mixing probability corresponding to the largest estimated parameter vector, $\balpha$ decreases sharply, indicating that these increasingly extreme parameter estimates are associated with components of negligible size. This behaviour does put into question whether typical model selection metrics are interpretable for distributions on the simplex whose concentration can theoretically be arbitrarily large. Indeed, looking as previous works on mixtures of Dirichlet distributions, \cite{dirichlet_nested_fmm}, instead considers a dynamic algorithm that eliminates a component once it's mixing proportion has fallen below a certain threshold. A similar approach can be taken in future work.

\begin{figure}[H]
    \centering
    \subfloat[]{%
        \includegraphics[width=0.45\textwidth]{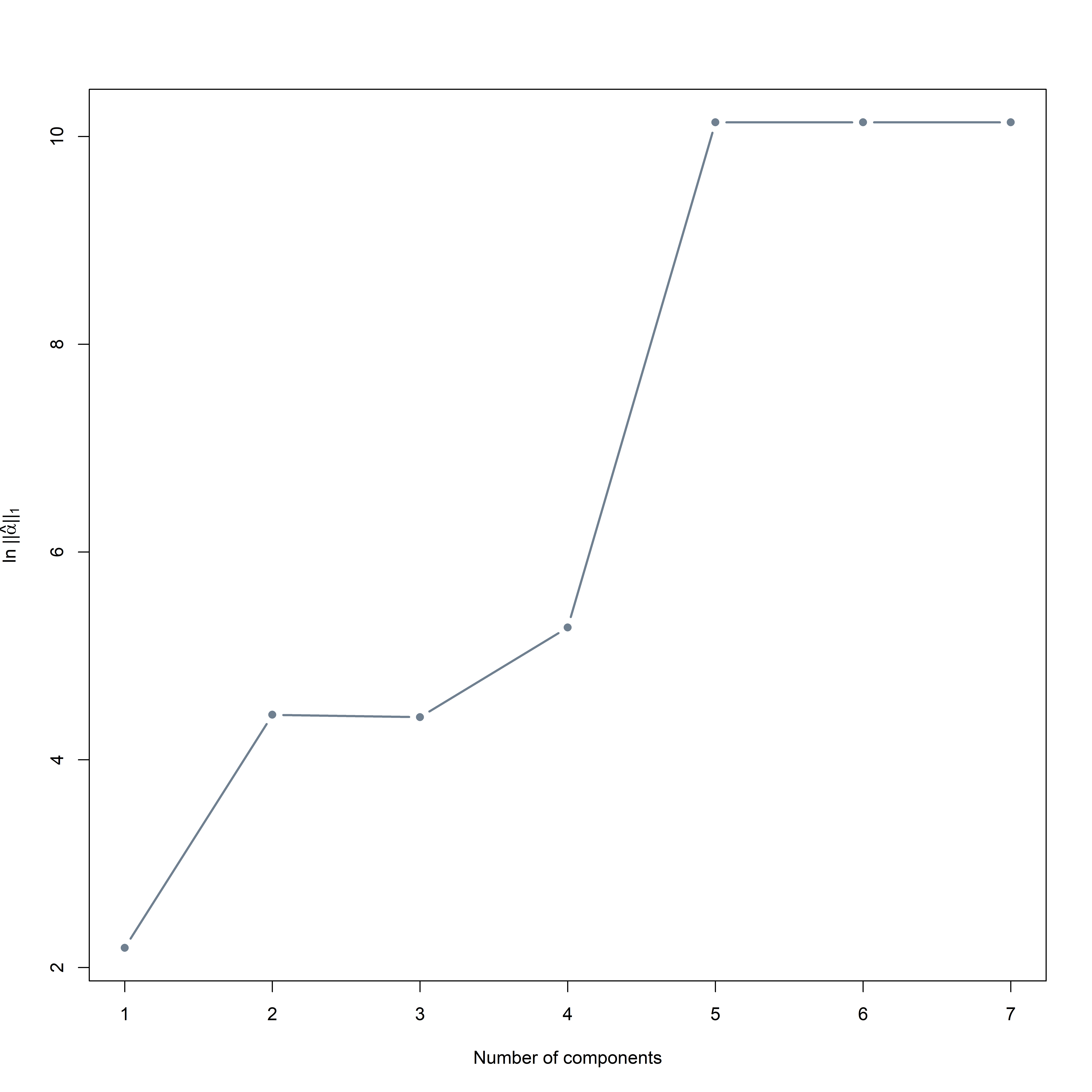}
        \label{alpha_norms}
    }
    \hfill
    \subfloat[]{%
        \includegraphics[width=0.45\textwidth]{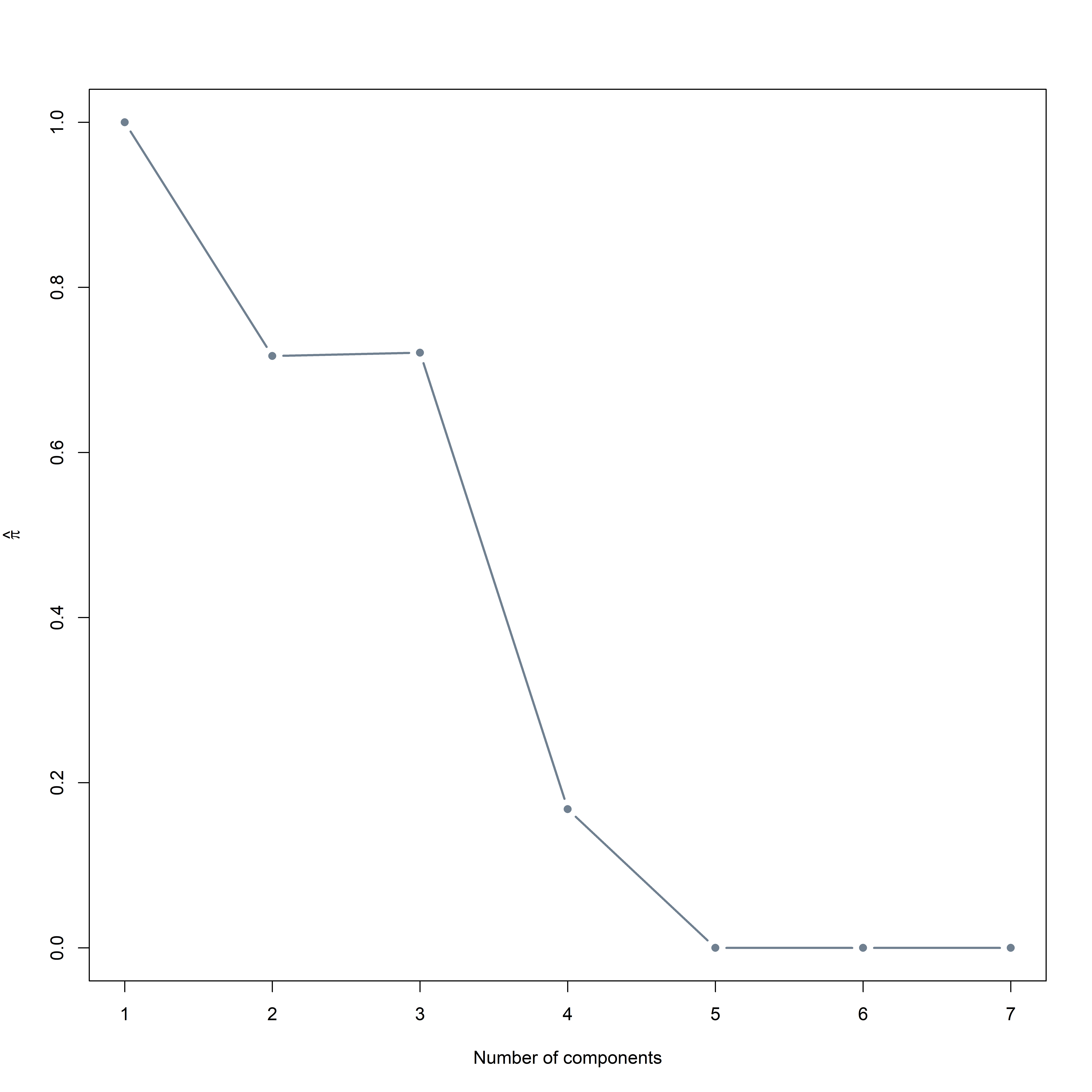}
        \label{prob_norms}
    }
    \caption{1-norm distances of parameter estimated (in natural logarithm for scale) (a) and the value of the mixing probability corresponding the largest estimated parameter (b), respectively. }
    \label{parameter_norms}
\end{figure}
Another interesting artefact of this application is comparing the clustering results for 3 components against 4 components. \figurename \ref{comparison_3_4} displays the estimated clusterwise-average elemental composition. Notice that the behaviours of the green, blue, and red lines are almost the same across the two models. In particular, the additional component in \figurename \ref{4_comp}, displayed as an orange line, has the exact same behaviour as the blue line, with the exception of the residual part. Thus, the interpretation of three clusters remain the same across the models, with the additional cluster placing emphasis on the proportion of unaccounted molecules.
\begin{figure}[H]
    \centering
    \subfloat[]{%
        \includegraphics[width=0.45\textwidth]{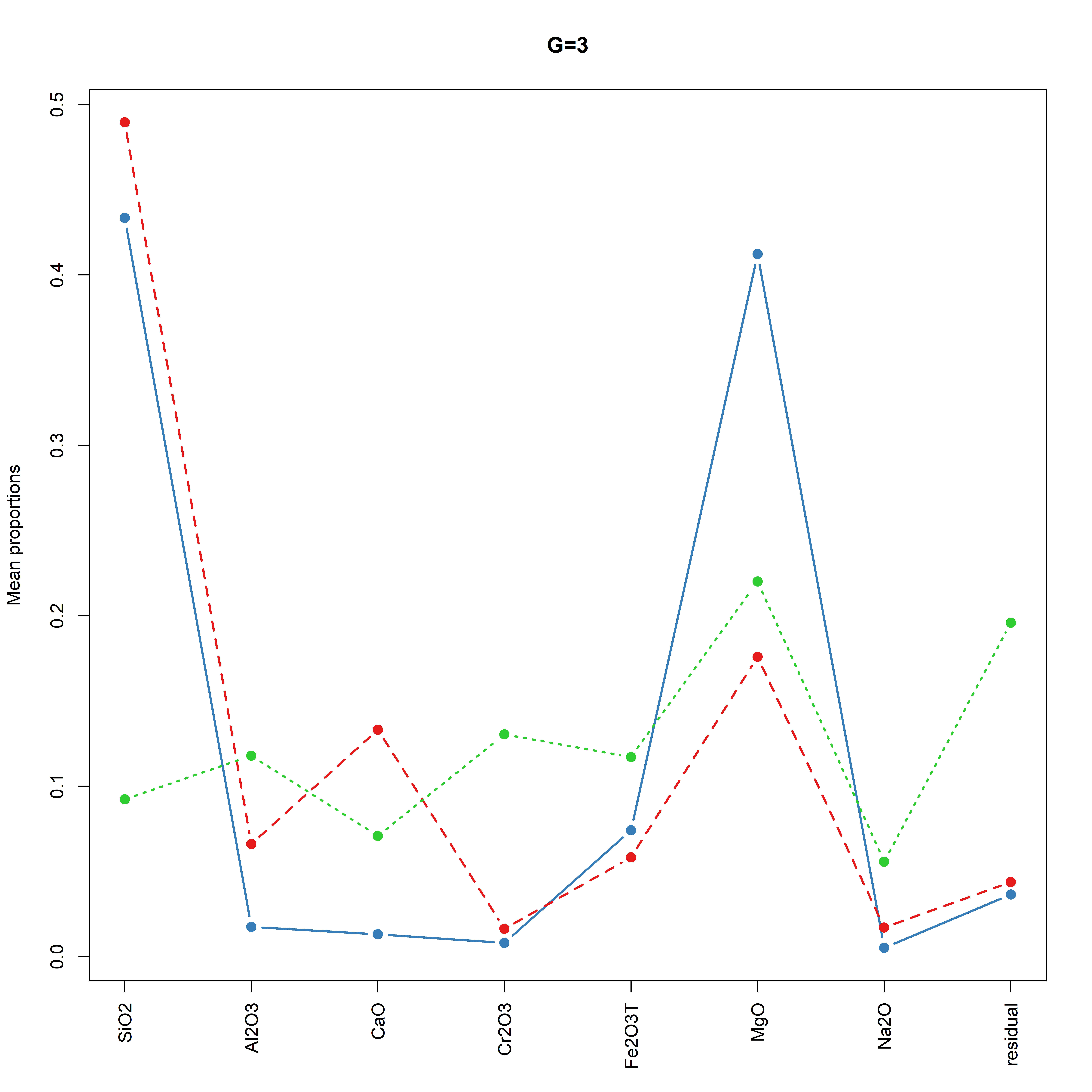}
    }
    \hfill
    \subfloat[]{%
        \includegraphics[width=0.45\textwidth]{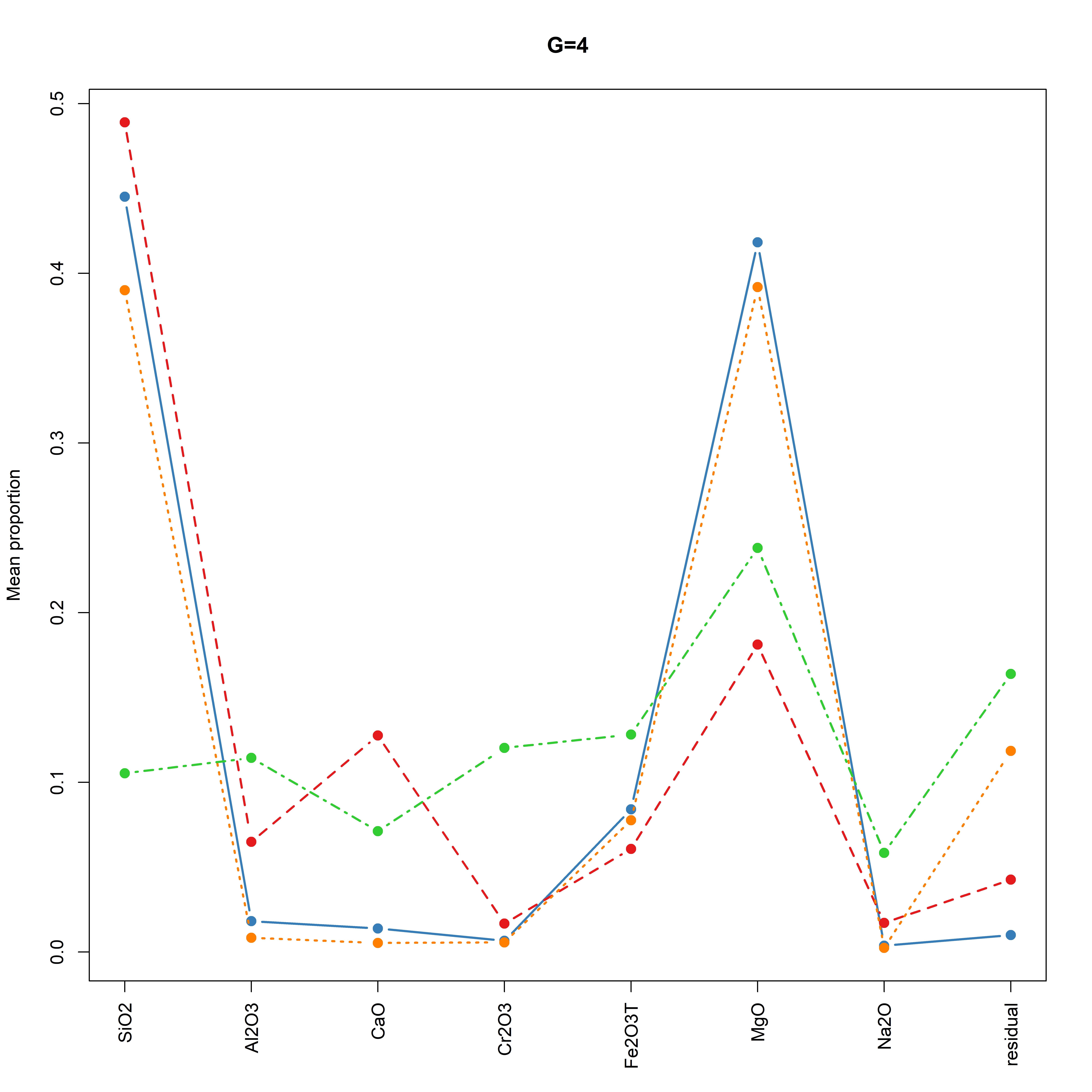}
        \label{4_comp}
    }
    \caption{Mean proportions by component for three components (a) and four components (b), respectively. }
    \label{comparison_3_4}
\end{figure}

Thus, we consider an exploration of the clusters under the 4-component fit. The estimated parameters are given in \tablename \ref{mle_estimates_earthchem}.
\begin{table}[ht]
    \centering
    \caption{Estimated parameters and mixing probabilities for the four-component Dirichlet mixture model fitted to the EarthChem dataset.}
    \setlength{\tabcolsep}{1cm}
    \begin{tabular}{lSSSS}
        \toprule
        Parameter & \multicolumn{4}{c}{Component} \\
        \cmidrule(lr){2-5}
                              & {$g=1$} & {$g=2$} & {$g=3$} & {$g=4$} \\
        \midrule
        $\hat{\bm{\pi}}$                        & 0.5349937  & 0.1856603  & 0.1114633 & 0.1678827  \\
        $\mathrm{SiO}_2$                        & 64.0041579 & 15.3372091 & 0.3229760 & 76.1231797 \\
        $\mathrm{Al}_2\mathrm{O}_3$             & 2.6181348  & 2.0365788  & 0.3506761 & 1.6382374  \\
        $\mathrm{CaO}$                          & 1.9982953  & 4.0014601  & 0.2183002 & 1.0436875  \\
        $\mathrm{Cr}_2\mathrm{O}_3$             & 0.9510733  & 0.5244204  & 0.3688581 & 1.1104286  \\
        $\mathrm{Fe}_2\mathrm{O}_{3\mathrm{T}}$ & 12.0847763 & 1.9065954  & 0.3929960 & 15.1497343 \\
        $\mathrm{MgO}$                          & 60.1331750 & 5.6818076  & 0.7301035 & 76.4747314 \\
        $\mathrm{Na}_2\mathrm{O}$               & 0.5398504  & 0.5387984  & 0.1791868 & 0.4821384  \\
        residual                                & 1.3369598  & 0.5025491  & 1.4397990 & 23.1276094 \\
        \bottomrule
    \end{tabular}
    \label{mle_estimates_earthchem}
\end{table}

The dataset also has descriptor columns, including the tectonic setting where the sample was taken and the name of the rock the sample originates from, which were not used in the clustering process. It would aid interpretation to view how the labels were assigned. Since there are numerous tectonic settings and rock names, a bar plot is not sufficient. Thus, \figurename \ref{word_cloud_4} has word clouds per cluster of the rock's name and place. The clusters can thus be desribed as follows:
Cluster 1: 
represents the dominant mantle-derived peridotites with relatively well-characterised major oxide compositions, having the highest mixing probability. The cluster is concentrated around a composition rich in  $\text{SiO}_2$ and MgO and is primarily associated with peridotite samples collected from cratonic tectonic settings.
Cluster 2: Rocks of this cluster are characterised by higher proportions CaO compared to other clusters. This cluster also groups Pyroxenite rocks sampled in intraplate off-craton settings.\\
Cluster 3: This cluster is distinguished by a substantially lower proportion of $\mathrm{SiO}_2$ than the remaining clusters, together with elevated proportions of $\mathrm{Fe}_2\mathrm{O}_{3\mathrm{T}}$, $\mathrm{Cr}_2\mathrm{O}_3$, $\mathrm{Al}_2\mathrm{O}_3$, and the largest residual proportion. Unlike the other clusters, no single oxide from Harzburgitic samples dominates apart from MgO, while the residual proportion is considerably larger, suggesting that a substantial fraction of the composition lies outside the measured major oxides. The comparatively small total parameter vector also indicates substantially greater compositional variability within this cluster.\\
Cluster 4: As seen in \figurename \ref{4_comp}, the compositional behaviour is similar to cluster 1, but with a larger proportion left unidentified. This is further motivated by the word cloud in \figurename \ref{word_cloud_4}, displaying the major rock type and tectonic setting agreeing with those of cluster 1. However, this cluster also sees samples from Ophiolitic settings than cluster 1, which is largely craton specific. From this cluster it is suspected that the clustering separates chemically similar rock types according to subtle differences in composition and geological origin.

Notice that the geological descriptors were not included in the clustering algorithm, the resulting groups relay clear geological labels. Samples assigned to the same mixture component are thus associated with similar rock types and tectonic environments.
\begin{figure}[H]
    \centering
    \includegraphics[width=0.75\linewidth]{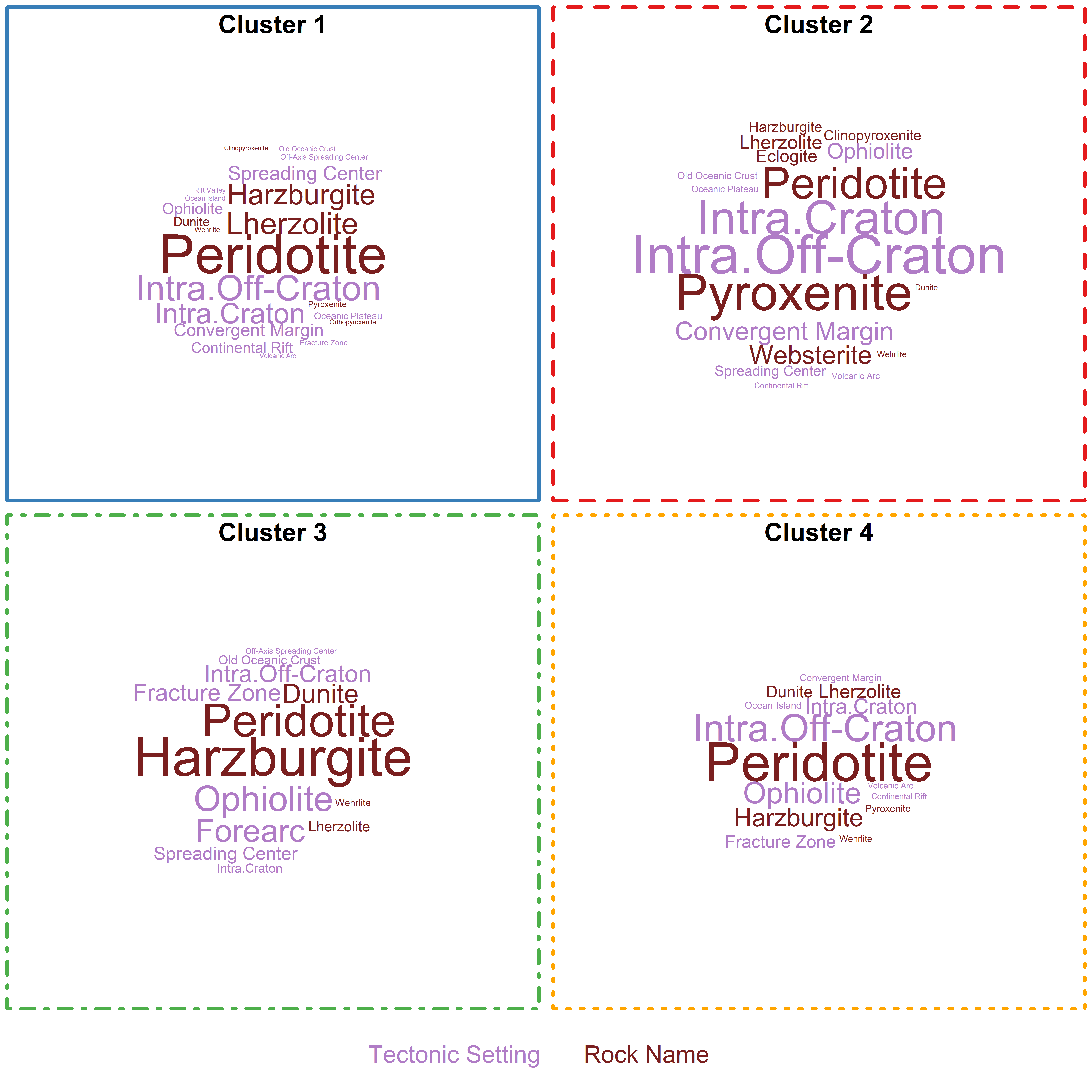}
    \caption{Word clouds of rock samples after assigning the dataset's rows to the cluster determined by fitting a four component Dirichlet mixture model.}
    \label{word_cloud_4}
\end{figure}

\subsection{Air Quality System data}
The Air Quality System (AQS) is a data repository that contains ambient air pollution data collected by the Environmental Protection Agency from monitors spread across the United States to support evaluation of long-term trends and to better quantify the impact of species of particulate matter (PM) concentrations in the size range below 2.5 mm aerodynamic diameter, among other particulates \citep{AQS_reason}. Samples collected from the air through monitors are tested for numerous molecule species linked to meteorology, pollution, and toxicity. Historic measurements can be accessed through the AQS website. \footnote[3]{US Environmental Protection Agency. Air Quality System Data Mart [internet database] available via \url{https://www.epa.gov/outdoor-air-quality-data}. Accessed 30 January 2026.} Air polluted with high levels of PM2.5 is a global concern. PM suspended in the atmosphere significantly alters air quality, influences climate change, and injurious to human health. Human exposure to these fine particles, which penetrate deep in the lungs, causes respiratory and cardiovascular diseases, non-infectious diseases, and a lowered quality of life. In addition to health effects, particulate pollution leads to haze formation and decreases visibility. PM2.5 particles contain various chemical components such as  organic carbon and nitrate $(\text{NO}_3^-)$ and sulphate $(\text{SO}_4^{2-})$ ions. The presence and relative proportions of these components provide insight into the dominant emission aerosols affecting a region. Understanding the speciation of PM2.5 and the factors that affect its composition provide supporting information leading to improved control measures and policies to reduce the impact of air pollution. In other words because different species have different pathways of being formed (i.e. several chemical reactions that lead to the same species in the air), chemical and physical effects, and lifespans, their corresponding control strategies will also differ. Thus, knowing their compositions in air can help determine what interventions are most effective. 

The AQS website hosts PM2.5 daily concentrations and their corresponding speciations. We apply the proposed algorithm on a snapshot of the data, $8^{th}$ March 2025. From prior studies on PM2.5 compositions \citep{apportionment, compositions, compositions2007,compositions2012}, commonly used variable include: ammonium ions ($\text{NH}_4^+$), elemental carbon emissions $(\text{EC})$, nitrates, $(\text{NO}_3^-)$, organic carbon emissions (OC), potassium ions $(\text{K}^+)$, silicon $(\text{Si})$, sodium ions $(\text{Na}^+)$, and sulphates $(\text{SO}_4^{2-})$. Monitors are subject to physical limitations, known as method detection limits (MDLs), meaning that while molecules might be detected, its actual concentration falls below a physical threshold and is not calculable. That is, the measurements are subject to left-censoring.  A range of monitors are designed for different groups of molecules and thus the MDLs are varied across molecules. From the AQS code list, the following MDLs for the respective molecules are: $< 0.017, <0.002, <0.008, <0.002, <0.014, <0.00753, <0.030$, and $<0.012$. Furthermore, not all locations took measurements for all molecules for the day. The resultant dataset consists of $n=99$ rows and suffers from both values missing at random (MAR) and censored observations, that is covered under the CAR mechanism. 
\begin{table}[ht]
    \centering
    \setlength{\tabcolsep}{1cm}
    \caption{Percentage of unobserved values by molecule.}
    \begin{tabular}{lSSl}
    \toprule
       Molecule           & {Missing (\%)}      &  {Censored (\%)} & MDL threshold\\
       \midrule  
       $\text{NH}_4^+$    & 7.070707            &   4.040404       &  $<$ 0.017   \\
       EC                 & 10.101010           &   1.010101       &  $<$ 0.002   \\
       $\text{NO}_3^-$    & 3.030303            &   0              &  $<$ 0.008   \\
       OC                 & 7.070707            &   0              &  $<$ 0.002   \\
       $\text{K}^+$       & 24.242424           &   21.212121      &  $<$ 0.014   \\
       $\text{Si}$        & 5.050505            &   3.030303       &  $<$ 0.00753 \\
       $\text{Na}^+$      & 38.383838           &   35.353535      &  $<$ 0.030   \\
       $\text{SO}_4^{2-}$ & 3.030303            &   0              &  $<$ 0.012   \\
        \bottomrule
    \end{tabular}
    \label{missing_aqs}
\end{table}

Model \eqref{fmm} was fitted on the dataset using the proposed algorithm for $K = 1,\dots,8$. A 4-component mixture model produced the highest ICL value out of the 8 fitted.

The estimated cluster proportions and means are given in \tablename~\ref{speciation_mle}:

\begin{table}[ht]
    \centering
    \setlength{\tabcolsep}{1cm}
    \caption{Estimated cluster proportions and averages for the AQS dataset, given per cluster. The estimated averages indicate what the average composition of molecules per cluster.}
    \begin{tabular}{lSSSS}
    \toprule
     Statistic  & \multicolumn{4}{c}{Cluster} \\
                  \cmidrule(lr){2-5}
                & {$g=1$} & {$g=2$} & {$g=3$} & {$g=4$} \\
            \midrule  
$\hat{\bm{\pi}}$    & 0.09882622  & 0.14545621   & 0.17222936 & 0.58348821 \\
$\text{NH}_4^+$     & 0.07420415  & 0.039732040  & 0.01875825 & 0.019398241 \\
EC                  & 0.03414402  & 0.046908328  & 0.02083747 & 0.042635363 \\
$\text{NO}_3^-$     & 0.34390680  & 0.216541059  & 0.06019464 & 0.054994856 \\
OC                  & 0.20576989  & 0.244681144  & 0.11100912 & 0.262431175 \\
$\text{K}^+$        & 0.01178039  & 0.008483184  & 0.01046683 & 0.007245226 \\
$\text{Si}$         & 0.01173320  & 0.014148719  & 0.02599546 & 0.017548164 \\
$\text{Na}^+$       & 0.01385559  & 0.005195346  & 0.02585452 & 0.012616918 \\
$\text{SO}_4^{2-}$  & 0.17238864  & 0.141834394  & 0.12205813 & 0.129140010 \\
Residual            & 0.13221732  & 0.282475786  & 0.60482557 & 0.453990047 \\
        \bottomrule
    \end{tabular}
    \label{speciation_mle}
\end{table}

\begin{figure}[H]
    \centering
    \includegraphics[width=0.8\linewidth]{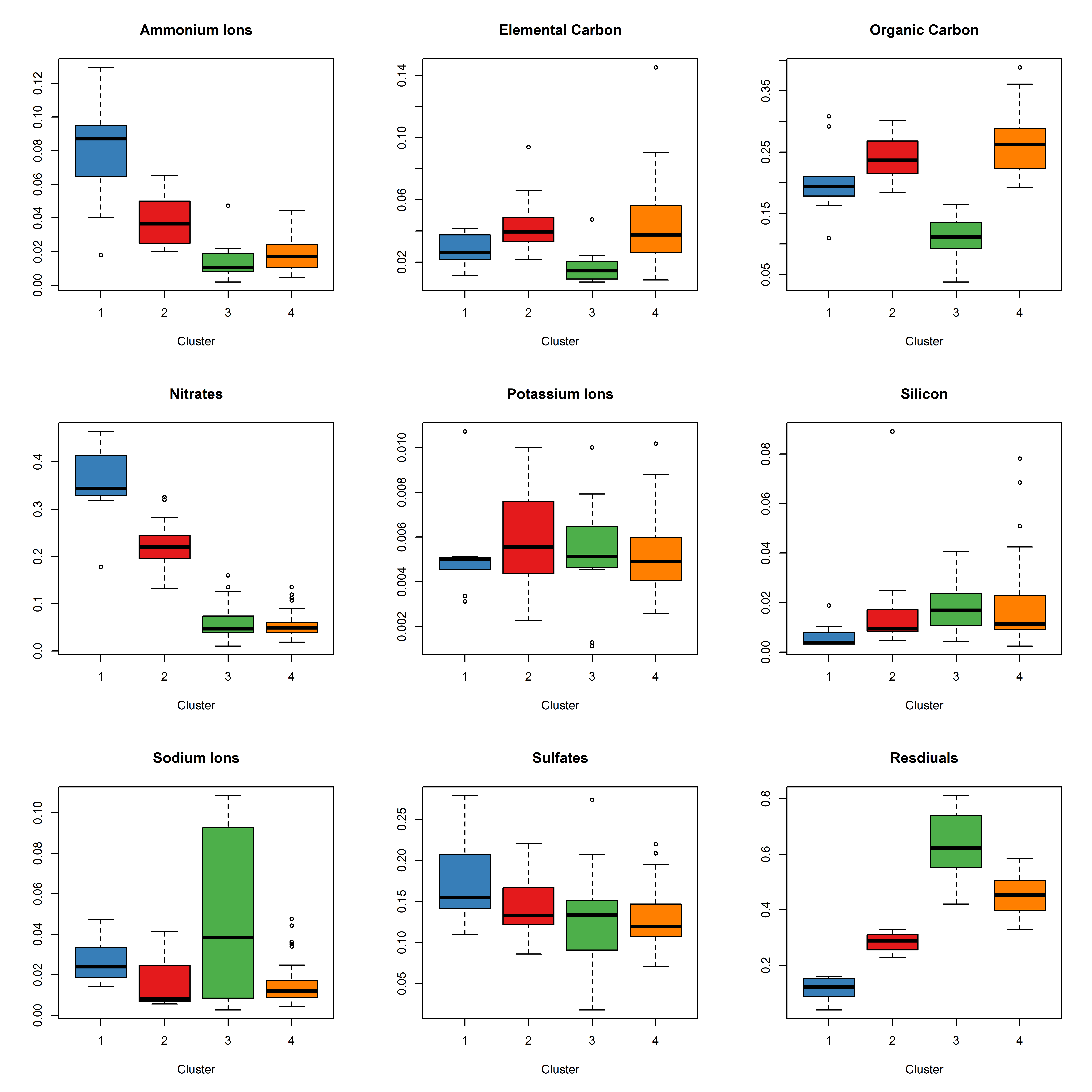}
    \caption{Cluster-wise box plots of the molecule proportions from the AQS dataset clustered according to the MAP classifications after fitting a $K=4$-component Dirichlet mixture model.}
    \label{speciation}
\end{figure}

\begin{figure}[H]
    \centering
    \includegraphics[width=0.7\linewidth]{boxplots4.png}
    \caption{Useful additional variables not used in the clustering algorithm to aid context of the identified clusters. The light absorption coefficient (per 100 megametres) in (a) and the average temperature on the day the molecules were measured in (b).}
    \label{absorp_temp}
\end{figure}
%π 0.134 0.133 0.568 0.165
%NH+
%4 0.032 0.067 0.019 0.019
%EC 0.049 0.037 0.043 0.020
%NO−
%3 0.259 0.213 0.261 0.110
%OC 0.186 0.324 0.053 0.058
%K+ 0.007 0.012 0.007 0.011
%Si 0.018 0.011 0.017 0.026
%Na+ 0.001 0.013 0.013 0.027
%SO2−
%4 0.130 0.170 0.130 0.123
%Residual 0.317 0.152 0.456 0.605
Cluster 1:  Fewer than 10\% of observations fall within cluster one, which is characterised by relatively high proportions of ammonium ($\text{NH}_4^+$), nitrate $(\text{NO}_3^-)$, and sulphate $(\text{SO}_4^{2-})$ compared with the other clusters. This cluster also exhibits lower contributions from potassium $(\text{K}^+)$ and silicon (S) ions. The measured chemical species provide a comparatively complete description of PM2.5 composition for this cluster, as indicated by its small residual part, as seen in \figurename~\ref{speciation}. Observations in this group are associated with lower ambient temperatures relative to the remaining clusters. Higher sulphate $(\text{SO}_4^{2-})$ proportions in particulate matter are widely understood to arise from sulphur-containing gases in the atmosphere that are transformed into particulate sulphate $(\text{SO}_4^{2-})$\citep{sulphate_formation, sulphate_duration, s02_ships}. Higher sulphate $(\text{SO}_4^{2-})$ proportions thus indicate the influence of sulphur-related emissions in the sampled air mass.  Similarly, higher proportions of ammonium ($\text{NH}_4^+$) and nitrate $(\text{NO}_3^-)$ in particulate form are commonly observed under cooler ambient conditions, whereas warmer conditions favour a reduced particulate contribution of these species \citep{ammonium_nitrate_volatile}. This pattern is consistent with the lower ambient temperatures observed for clusters one and two in Figure \ref{absorp_temp}.

In cluster two: PM2.5 is largely composed of elemental carbon, organic carbon, and nitrate $(\text{NO}_3^-)$. This cluster also shows relatively high variability in potassium $(\text{K}^+)$ ion contributions. Overall, this cluster reflects a mixed composition in which both carbon-based particulates and inorganic ions contribute substantially to PM2.5 mass. The co-occurrence of these components suggests conditions under which multiple particle constituents are present simultaneously, rather than dominance by a single class of compounds. Such mixed compositions have been observed in previous studies of urban PM2.5, where particles reflect contributions from a range of emission-related and atmospheric processes rather than a single dominant source.\newline
\newline
Cluster 3: Characterised by a large residual part, this cluster indicates that a significant portion of PM2.5 mass is not captured by the measured chemical species included in this dataset. These measurements were taken during higher ambient temperatures and show lower light absorption, suggesting that the unmeasured molecules are less light-absorbing than in other clusters. The combination of high residual mass and low light absorption suggests that the unmeasured molecules contributing to PM2.5 in this cluster are predominantly weakly light-absorbing. This contrasts with clusters exhibiting higher absorption coefficients, where light-absorbing ions comprise a larger fraction of PM2.5 mass. These findings indicate that, under warmer conditions, PM2.5 composition is less fully described by ion and carbon measurements. This signals the presence of additional, distinct particles, and thus warrants further investigation. Such particles are commonly darker in appearance and absorb more visible light than the weakly absorbing material inferred for cluster 3. Well-known examples of strongly light-absorbing particulate matter include soot-like particles produced during combustion processes, such as fires and diesel exhaust from traffic, are strongly light-absorbing and contribute substantially to measured higher light absorption coefficients \citep{oc_lac}.
\newline
\newline
Cluster 4: This cluster contains the majority of observations. Readings from this cluster have the largest relative  proportion of organic carbon alongside comparatively low proportions of nitrate $(\text{NO}_3^-)$ and other inorganic ions. Observations assigned to this cluster are associated with  warmer ambient temperatures and a moderate light absorption coefficient, suggesting that PM2.5 is dominated by organic material rather than secondary inorganic components from the other clusters. Organic-dominated compositions are common in urban environments, where everyday human activity contributes substantially to PM2.5 mass \citep{auckland_comp, egypt}. The dominance of this cluster is consistent with the monitoring network, as most sites are located in cities and towns.

The lower ammonium ($\text{NH}_4^+$) and nitrate $(\text{NO}_3^-)$ proportions in clusters three and four are associated with warmer conditions. Ammonium ($\text{NH}_4^+$) in ambient PM2.5 has been linked in previous studies to agricultural activities such as fertiliser application and livestock waste \citep{ammonium_source,ammonium_nitrate_volatile}. These examples provide context for the types of emissions that may contribute to elevated ammonium ($\text{NH}_4^+$) levels. In contrast, clusters 2 and 4 display higher light absorption coefficients alongside greater elemental carbon compositions, indicating a larger composition of light-absorbing particulate material.

Overall, these clusters distinguish between fresh combustion pollution, chemically aged pollution, and pollution formed under different atmospheric conditions. It is a descriptive view of incomplete proportional data and allows for further long-term investigation. By grouping observations with similar compositions, the clustering approach provides an interpretable summary of heterogeneity in PM2.5 data.

\section{Conclusion} 
A key challenge in compositional data analysis arises when observations are incomplete. While finite mixtures of Dirichlet distributions offer a flexible and interpretable approach for modelling heterogeneous compositional populations, existing likelihood-based approaches do not readily extend to settings with missing or censored components on the unit simplex. As a result, practitioners often discard incomplete observations or apply transformations to unconstrained spaces. These approaches can influence interpretability of compositional data \citep{alr_em_algorithm}. The method accommodates a CAR mechanism, which includes special cases of coarsening, such as coarsening completely at random (CCAR), missing at random (MAR), and censoring. Simulation studies demonstrate that the proposed approach substantially improves model selection and clustering performance relative to maximum likelihood estimation based solely on complete cases. The algorithm more often identifies the correct number of mixture components and produces more accurate cluster assignments than existing approaches. These improvements persist even under severe data incompleteness, including observations with only a single observed component and datasets containing up to 90\% of unobserved values.\\
The proposed methodology was also applied to two real datasets involving compositional variables subject to physical additive constraints. In both cases, the algorithm successfully estimated finite mixture models and identified meaningful clusters directly on the simplex. Because the analysis remains on the simplex, the resulting clusters can be interpreted directly in terms of the original compositional variables, without requiring transformations or ad hoc adjustments.
More broadly, this work reinforces the role of mixtures of the Dirichlet distribution as suitable models for compositional data. When incomplete compositions arise, analysts have historically relied on methods designed for unconstrained data, adapting the data rather than the model. The framework proposed here reverses this perspective by developing estimation tools that respect the geometry and constraints of compositional data. By enabling likelihood-based inference and clustering directly on the simplex in the presence of unobserved parts, this work opens the door to further developments in simplex-based modeling, including richer mixture models and more flexible inference procedures.

\label{conclusion}

\section*{Declarations}
\begin{itemize}
\item Funding:\\
This work is supported in part by the Centre of Excellence in Mathematical and Statistical Sciences, 363 based at the University of the Witwatersrand (SA), grant number PMDS230705128094 as well as the Department of Research and Innovation (DRI).The opinions expressed and conclusions arrived at are those 364 of the authors and are not necessarily to be attributed to the NRF.\\
\\
Antonio Punzo acknowledges the support by the Italian Ministry of University and Research (MUR) under the PRIN 2022 grant number 2022XRHT8R (CUP: E53D23005950006), as part of “The SMILE Project: Statistical Modelling and Inference to Live the Environment”, funded by the European Union – Next Generation EU.
\item Conflict of interest/Competing interests\\
Not applicable
\item Ethics approval and consent to participate\\
Not applicable
\item Consent for publication\\
Not applicable
\item Data availability:\\
The xenolith To-go dataset can be access from the EarthChem repository: \url{https://search.earthchem.org/datatogo}\\
%The American Time Use dataset can be downloaded from the US Bureau of Labor Statistics website, and is under the list of Basic files for the year 2024: \url{https://www.bls.gov/tus/data/datafiles-2024.htm}\\
Samples collected from the air through monitors are tested for numerous molecule species linked to meteorology, pollution, and toxicity. Historic measurements can be accessed through the AQS website:\url{https://www.epa.gov/outdoor-air-quality-data}
\item Materials availability\\
Not applicable
\item Code availability 
Not applicable
\item Author contribution\\
Not applicable
\end{itemize}

\bibliography{main}
\end{document}